\newcommand{\beq}{\begin{equation}}
\newcommand{\eeq}{\end{equation}}
\newcommand{\epc}{\hspace{1pc}}
\newcommand{\wh}{\widehat}
\newtheorem{theorem}{Theorem}
\newtheorem{proposition}[theorem]{Proposition}
\newtheorem{lemma}[theorem]{Lemma}
\newtheorem{definition}[theorem]{Definition}
\newtheorem{remark}[theorem]{Remark}
\begin{document}

\title{{\Huge {\vspace{-2cm}}}\textbf{\Huge Joint Sensing and Power Allocation
in Nonconvex Cognitive Radio Games: Quasi-Nash Equilibria{\vspace{-0.3cm}}}}

\author{{\normalsize Jong-Shi Pang$^{1}$ and Gesualdo Scutari$^{2}$}{\small{}
}\\
{\small{} $^{1}$Dept. of Industrial and Enterprise Systems Engineering,
University of Illinois, Urbana, IL 61801, U.S.A. }\\
{\small $^{2}$Dept. of Electrical Engineering, State University of
New York at Buffalo, Buffalo, NY 14260, U.S.A. }\\
{\small{} Emails:}\texttt{\small jspang@illinois.edu }{\small and}\texttt{\small{}
gesualdo@buffalo.edu.}{\normalsize }%
\thanks{The work of Pang is based on research supported by the U.S.A. National
Science Foundation grant CMMI 0969600 and by the Air Force Office
of Sponsored Research award No. FA9550-09-10329. The work of Scutari
was supported by U.S.A. National Science Foundation grant CSM 1218717.
{\small }\protect \\
\indent Copyright (c) 2012 IEEE. Personal use of this material is
permitted. However, permission to use this material for any other
purposes must be obtained from the IEEE by sending a request to pubs-permissions@ieee.org.%
}{\normalsize \vspace{-.2cm}}}

\date{{\normalsize Submitted to }\emph{\normalsize IEEE Transactions on}{\normalsize{}
}\emph{\normalsize Signal Processing}{\normalsize , April 7, 2012.
Second revision, Dec. 25, 2012. \vspace{-.9cm}}}
\maketitle
\begin{abstract}
\noindent In this paper, we propose a novel class of Nash problems
for Cognitive Radio (CR) networks composed of multiple primary users
(PUs) and secondary users (SUs) wherein each SU (player) competes
against the others to maximize his own opportunistic throughput by
choosing \emph{jointly} the sensing duration, the detection thresholds,
\emph{and} the vector power allocation over a multichannel link. In
addition to power budget constraints, several (deterministic or probabilistic)
interference constraints can be accommodated in the proposed general
formulation, such as constraints on the maximum individual/aggregate
(probabilistic) interference tolerable from the PUs. To keep the optimization
as decentralized as possible, global interference constraints, when
present, are imposed via pricing; the prices are thus additional variables
to be optimized. The resulting players' optimization problems are
nonconvex and there are price clearance conditions associated with
the nonconvex global interference constraints to be satisfied by the
equilibria of the game, which make the analysis of the proposed game
a challenging task; none of classical results in the game theory literature
can be successfully applied. To deal with the nonconvexity of the
game, we introduce a relaxed equilibrium concept$-$the Quasi-Nash
Equilibrium (QNE)$-$and study its main properties, performance, and
connection with local Nash equilibria. Quite interestingly, the proposed
game theoretical formulations yield a considerable performance improvement
with respect to current centralized and decentralized designs of CR
systems, 
which validates the concept of QNE.{\normalsize \vspace{-0.3cm}} 
\end{abstract}

\section{Introduction and Motivation}

In the last decade, Cognitive Radio (CR) has received considerable
attention as a way to improve the efficiency of radio networks \cite{Mitola_MoMuC99,Haykin_JSAC05}.
The CR networks adopt a hierarchical access structure where the primary
users (PUs) are the legacy spectrum holders while the secondary users
(SUs) are the unlicensed users who sense the electromagnetic environment
and adapt their transceiver parameters as well as the resource allocation
decisions in order to dynamically access temporally unoccupied spectrum
regions. 

The challenge for a reliable sensing algorithm is to identify suitable
transmission opportunities without compromising the integrity of the
PUs. One of the design criteria is to make the probability of false
alarm as low as possible, since it measures the percentage of vacant
spectrum which is misclassified as busy, increasing thus the opportunistic
usage of the spectrum from the SUs. On the other hand, in order to
limit the probability of interfering with PUs, it is desirable to
keep the missed detection probability as low as possible. The detection
thresholds are the trade-off factor between the false alarm and the
missed detection probabilities; low thresholds will result in high
false alarm rate in favor of low missed detection probability and
vice versa. Alternatively, the choice of the sensing time offers a
trade-off between the quality and speed of sensing: increasing the
sensing times permits to reach both low false alarm and detection
probability values, thus reducing the time available for secondary
transmissions, which would result in low SUs' throughput. The above
trade-off calls naturally for a joint determination of the sensing
and transmission parameters of the SUs, under the paradigm of selfish
behavior among these users. The modeling and analysis of this competitive
multi-agent optimization is the main subject of this work.

\subsection{Related work\vspace{-0.2cm}}

The joint optimization of the sensing and transmission parameters
has been only \emph{partially} addressed in the literature; 
current works on the subject can be divided in the following three
main classes.\emph{\smallskip{}
}

\noindent \textbf{-} \textbf{Optimization of the sensing parameters
only}\emph{ }\cite{Quan-Cui-Poor-Sayed,Quan-Cui-Poor-Sayed_TSP09,Liang-Zeng-Peh-Hoang_TWC08,Kim-Giannakis08,Rongfei-Hai_TWC10,Hoseini_Beaulieu_ICC10}\emph{.}
In \cite{Quan-Cui-Poor-Sayed,Quan-Cui-Poor-Sayed_TSP09}, the authors
proposed alternative centralized schemes that optimize the detection
thresholds for a bank of energy detectors, in order to maximize the
so-called\textit{\emph{ opportunistic throughput, while keeping the
sensing time and the transmission parameters of the SU fixed and given
a priori. }}The optimization of the sensing time and the sensing time/detection
thresholds for a given missed detection probability and transmission
rate of one SU was addressed in \cite{Liang-Zeng-Peh-Hoang_TWC08}
and \cite{Rongfei-Hai_TWC10,Hoseini_Beaulieu_ICC10}, respectively.
A throughput-sensing trade-off for a given transmission rate of the
SU was studied in \cite{Kim-Giannakis08}. In the above papers there
is no optimization of the SU's transmission strategies, and the proposed
schemes are applicable only to CR scenarios composed by \emph{one
}PU\emph{ }and SU\emph{.\smallskip{}
 }

\noindent \textbf{-} \textbf{Optimization of the transmission parameters
only}\emph{ }\cite{Xing-Chetan-Mathur-Haleem-Chandramouli-Subbalakshmi_TranMoBComp07_IntTempLim,Lu-WWang-TWang-Peng_GLOB08,WangPengWang_WCNC07,Luo-Pang_IWFA-Eurasip,Scutari-Palomar-Barbarossa_SP08_PI,Scutari-Palomar-Barbarossa_AIWFA_IT08,Leshem-Zehavip_SPMag_sub09,Scutari-Palomar-Facchinei-Pang_SPMag09,Pang-Scutari-Palomar-Facchinei_SP_10,SchmidtShiBerryHonigUtschick-SPMag}\emph{.}
These papers address the optimization of the SUs' transceivers in
a \emph{multiuser} OFDM CR scenario; several centralized \cite{Xing-Chetan-Mathur-Haleem-Chandramouli-Subbalakshmi_TranMoBComp07_IntTempLim,Lu-WWang-TWang-Peng_GLOB08,WangPengWang_WCNC07,SchmidtShiBerryHonigUtschick-SPMag}
schemes or distributed algorithms based on a game theoretic approach
\cite{Luo-Pang_IWFA-Eurasip,Scutari-Palomar-Barbarossa_SP08_PI,Scutari-Palomar-Barbarossa_AIWFA_IT08,Leshem-Zehavip_SPMag_sub09}
were proposed; a recent overview of the latter approaches can be found
in \cite{Leshem-Zehavip_SPMag_sub09,Larsson-Jorswieck-Lindblom-Mochaourab_SPMag_sub09}.
A general framework based on variational inequalities was proposed
in \cite{Scutari-Palomar-Facchinei-Pang_SPMag09,Pang-Scutari-Palomar-Facchinei_SP_10}
to study and solve in a distributed way convex noncooperative games
with (possibly) side (i.e., coupling) constraints (e.g., interference
constraints). In all the aforementioned papers the sensing process
is not considered as part of the optimization; in fact the SUs do
not perform any sensing but they are allowed to transmit over the
licensed spectrum provided that they satisfy interference constraints
imposed by the PUs, 
no matter if the PUs are active of not. \emph{\smallskip{}
}

\noindent \textbf{-} \textbf{Joint optimization of }\textbf{\emph{some}}\textbf{
of the sensing/transmission parameters}\emph{ }\cite{Kang-Liang-Garg-ZhangVT09,Pei-Liang-Teh-Li_TWC09,Stotas-Nallanathan_TC11,Barbarossa-Sardellitti-Scutari_CAMSAP09,Barbarossa-Sardellitti-Scutari_Cogis09,Huang-Lozano_ICASSP11}\emph{.}
In \cite{Kang-Liang-Garg-ZhangVT09} (or \cite{Pei-Liang-Teh-Li_TWC09,Stotas-Nallanathan_TC11}),
the sensing time and the transmit power (or the power allocation \cite{Pei-Liang-Teh-Li_TWC09,Stotas-Nallanathan_TC11}
over multi-channel links) of one SU were jointly optimized while keeping
the detection probability (and thus the decision threshold) fixed
to a target value. In \cite{Barbarossa-Sardellitti-Scutari_CAMSAP09,Barbarossa-Sardellitti-Scutari_Cogis09},
the authors focused on the joint optimization of the power allocation
and the equi-false alarm rate of one SU \cite{Barbarossa-Sardellitti-Scutari_CAMSAP09}
over multi-channel links, for a \emph{fixed} sensing time. The case
of multiple SUs and one PU was considered in \cite{Barbarossa-Sardellitti-Scutari_Cogis09}
(and more recently in \cite{Huang-Lozano_ICASSP11}), under the same
assumptions of \cite{Barbarossa-Sardellitti-Scutari_CAMSAP09}; however
no formal analysis of the proposed formulation was provided. Moreover,
these papers \cite{Barbarossa-Sardellitti-Scutari_CAMSAP09,Barbarossa-Sardellitti-Scutari_Cogis09,Huang-Lozano_ICASSP11}
did not consider the sensing overhead as part of the optimization,
leaving unclear how to optimally choose the sensing time. 

\vspace{-0.2cm}

\subsection{Main contributions \label{sub:Main-contributions}\vspace{-0.2cm}}

What emerges from the analysis of existing literature is that, up
to now, state-of-the-art approaches have focused on the optimization
of specific features and single components of a CR system in isolation.
In this paper we move a step ahead and propose a novel class of Nash
Equilibrium Problems (NEPs), suitable for designing multiple primary/secondary
user CR networks, under different practical settings. Aiming to dynamically
access temporally unoccupied primary spectrum regions, in our formulation,
the SUs maximize their own opportunistic throughput by \emph{jointly}
optimizing the sensing time, the decision thresholds of a bank of
energy detectors, and the power allocation over multi-channel links.
Because of sensing errors, it may happen however that the SUs attempt
to access part of licensed spectrum that is still occupied by active
PUs, causing thus harmful interference. This motivates the introduction
of \emph{probabilistic} (rather than deterministic) interference constraints
that are imposed to control the power radiated over the licensed spectrum
\emph{only when} a missed detection event occurs (in a probabilistic
sense). The proposed formulation accommodates alternative combinations
of power/interference constraints. For instance, in addition to classical
(deterministic) transmit power (and possibly spectral masks) constraints,
we envisage the use of (probabilistic) average individual (i.e., on
each SU) and/or global (i.e., over all the SUs) interference tolerable
by the primary receivers. Local interference constraints fit naturally
to scenarios where the SUs are not willing to cooperate; whereas the
global ones, which are less conservative, are more suitable for settings
where SUs may want to trade some signaling for better performance.
By imposing a coupling among the power allocations of the SUs, global
interference constraints introduce a new challenge in the system design:
how to enforce global interference constraints without requiring a
centralized optimization? A major contribution of the paper is to
address this issue by introducing a pricing mechanism in the game,
through a penalization in the players\textquoteright{} objective functions.
The prices will be then additional variables to be optimized, while
guaranteeing the global interference constraints to be satisfied at
any solution of the game.

The resulting game-theoretical formulations belong to the class of
nonconvex games, where the nonconvexity occurs at both the objective
functions and feasible sets of the SUs' optimization problems. Indeed
the objective functions of the SUs (to be maximized) are not quasi-concave
and the local/global interference constraints are bi-convex and thus
nonconvex; moreover, in the presence of global interference constraints,
there are possibly unbounded price variables to be optimized and pricing
clearing conditions associated with these constraints to be satisfied
at the equilibrium. All this makes the analysis of the proposed games
a challenging task; none of previous results in the game theory literature
can be successfully applied (see Sec.\! \ref{sec:NE-LNE-QE} for
more details); without (quasi-)concavity of the players' payoff functions
or convexity of the strategy sets, a solution of the game$-$the Nash
Equilibrium (NE)$-$may not even exist \cite{Baye-Tian-Zhou_RES93,Rosen_econ65}.
To overcome this issue, alternative refinements of the NE concept
have been introduced in the literature, in the form of ``local''
solutions (see, e.g., \cite{Baye-Tian-Zhou_RES93,vanDamme_book,Ferrer_AniaEcLetters01,CornetaCzarnecki01}
and references therein): in contrast to the NE that is resistant to
``arbitrarily large'' deviations of single player's strategies,
\emph{local} equilibria require stability only against ``small''
unilaterally deviations of the players. Among all, we mention here
some local solution concepts that have been proposed in the literature:
such as the ``critical NE'' \cite{Baye-Tian-Zhou_RES93}, the ``Local
NE'' (LNE) \cite{Ferrer_AniaEcLetters01}, and the ``generalized
equilibrium'' \cite{CornetaCzarnecki01}. Despite their theoretical
interest, those local solution concepts have a very limited applicability:
when available, there are only abstract mathematical conditions granting
their existence (see, e.g., \cite{Baye-Tian-Zhou_RES93,CornetaCzarnecki01}),
whose verification for games arising from realistic applications such
as those presented in the present paper remains elusive. A system
design based on these concepts may lead to an unpredictable system
behavior (e.g., equilibrium existence vs. nonexistence, convergence
of algorithms vs. nonconvergence), which is not a desirable feature
in practice. 

In this paper, we propose an alternative formalization of solution
in nonconvex games; building on the well-established and accepted
definition of stationary solutions of a nonlinear programming, we
introduce a (relaxed) equilibrium concept for a nonconvex NEP, which
is a solution of the aggregated stationarity conditions (the Karush-Kuhn-Tucker
conditions) of the players' optimization problems. We call such a
stationary tuple a \textsl{quasi-Nash equilibrium} (QNE) of the (nonconvex)
game; the prefix ``quasi'' is intended to signify that a NE (if
it exists) must be a QNE under mild conditions (constraint qualifications).
While the concept of QNE seems quite natural and conceptually similar
to the idea of critical NE \cite{Baye-Tian-Zhou_RES93} and generalized
equilibrium \cite{CornetaCzarnecki01}, its analysis is not an easy
task (cf. Sec.\!\!\! \ref{sec:NE-LNE-QE}), and cannot be addressed
using results in previous papers (e.g., \cite{Baye-Tian-Zhou_RES93,CornetaCzarnecki01});
for instance, in the absence of convexity or boundedness of the price
variables that could lead to the non-existence of a NE, the existence
of a QNE is by no means obvious. Building on our recent results in
\cite{PScutari10}, a major contribution of the paper is to provide
a satisfactory characterization of the QNE, in terms of existence,
connection with the (L)NE, and performance. The main features of the
QNE are the following: i) it \emph{always} exists for the proposed
class of nonconvex games; ii) it is shown to have some local optimality
properties (in the sense described in Sec.\! \ref{th:QE}); and iii)
the proposed joint optimization of sensing and transmission strategies
based on the QNE yields a considerable performance improvements (more
than one hundred per cent in ``high'' interference regimes) with
respect to both\emph{ centralized} \cite{SchmidtShiBerryHonigUtschick-SPMag}
and\emph{ decentralized} \cite{Luo-Pang_IWFA-Eurasip,Scutari-Palomar-Barbarossa_AIWFA_IT08,Leshem-Zehavip_SPMag_sub09,Pang-Scutari-Palomar-Facchinei_SP_10}
CR designs that do not perform any joint optimization. These desired
properties validate the use of the QNE both theoretically and practically. 

At the QNE of the game, the optimal sensing times of the SUs tend
to be different, implying that some SUs may start transmitting while
some others are still in the sensing phase, which may introduce a
significant performance degradation in the sensing process of the
other SUs. To overcome this issue, a further contribution of the paper
is to propose a distributed approach to ``force'' (at least asymptotically)
the same optimal sensing time for all the SUs; which is referred to
as\emph{ the equi-sensing case.} 

\textcolor{black}{In summary, the main contributions of the paper
are the following:}\\
\textcolor{black}{$\bullet$ We propose a novel class of NEPs (possibly
with}\textcolor{black}{\emph{ }}\textcolor{black}{pricing)}\textcolor{black}{\emph{,}}\textcolor{black}{{}
where for the first time a }\textcolor{black}{\emph{joint optimization}}\textcolor{black}{{}
of the detection threshold, the sensing time, and the transmission
strategies of multiple SUs is performed, under power and several (novel)
local and/or global probabilistic interference constraints;}\\
\textcolor{black}{$\bullet$ We introduce a relaxed equilibrium concept$-$the
QNE$-$and develop a novel optimization-based theory for studying
its main properties, the connection with the LNE, and its performance;}\\
\textcolor{black}{$\bullet$ We show how to modify the original NEPs
to impose in a distributed way an optimal equi-sensing time to all
the SUs, and study the main properties of the resulting game.}\\
\textcolor{black}{Overall, we introduce a new line of analysis based
on Variational Inequalities (VIs) in the literature of nonconvex games
(possibly) with pricing that is expected to be broadly applicable
for other game models.}

\textcolor{black}{Because of the space limitation, we omit here details
on how to solve the proposed games, which is the subject of a companion
paper \cite{Pang-Scutari-NNConvex_PII}, where we propose alternative
}\textcolor{black}{\emph{distributed}}\textcolor{black}{{} algorithms
along with their convergence properties and a detailed discussion
on their practical implementation; and also establish the existence
of NE of the games under a set of small-interference assumptions between
the SUs and PUs (that are }\textcolor{black}{\emph{not}}\textcolor{black}{{}
needed for the QNE). }

\textcolor{black}{The rest of the paper is organized as follows. Sec.}\!\!\textcolor{black}{{}
\ref{sec:System-Model} describes the system model, whereas Sec. \ref{sub:-Game-theoretical-formulation}
introduces the new game-theoretical formulations. The analysis of
the game is addressed in Sec.}\!\!\textcolor{black}{{} \ref{sec:NE-LNE-QE},
where the LNE and the concept of QNE are introduced, and their properties
and connection are studied. Sec.}\!\!\textcolor{black}{{} \ref{sec:The-Equi-sensing-Case}
focuses on the equi-sensing case; whereas Sec.}\!\!\textcolor{black}{{}
\ref{sec:Extensions-and-Generalizations} outlines some extensions
of the proposed formulations covering more general settings. Sec.}\!\!\textcolor{black}{{}
\ref{sec:Numerical-Results} provides some numerical results showing
the quality of the proposed approach;  and finally Sec.}\!\textcolor{black}{{}
\ref{sec:Conclusions} draws the conclusions.}\textcolor{red}{{} }\vspace{-.3cm}

\section{System Model \label{sec:System-Model}\vspace{-0.3cm}}

We consider a scenario composed of $Q$ active SUs, each formed by
a transmitter-receiver pair, coexisting in the same area and sharing
the same band. We focus on block transmissions over SISO frequency-selective
channels. It is well known that, in such a case, multicarrier transmission
is capacity achieving for large block-length. Based on the 802.22-Wireless-Regional-Area-Networks
standard (WRAN) \cite{FCC_WGWRAN}, we assume that the Medium Access
Control (MAC) frame is divided in two slots: one sensing slot and
one data slot. During the sensing interval, the SUs stay silent and
sense the electromagnetic environment looking for the spectrum holes,
whereas during the data slot they transmit (possibly) simultaneously
over the portions of the licensed spectrum detected as available.
The sensing and transmission phases are described in Sec. \ref{sub:Detection-problem}
and Sec. \ref{sub:The-transmission-phase}, respectively, whereas
the proposed joint optimization of the sensing and transmission strategies
over the two phases is introduced in Sec. \ref{sub:-Game-theoretical-formulation}. 

We wish to point out that, even if we focus on ad-hoc networks and
adopt the classical CR terminology (e.g., PUs, SUs, etc.), the proposed
model and results are applicable to broader scenarios, wherever there
are heterogeneous (small-cell) networks organized in a hierarchical
or multi-tier structure and sharing the same spectrum (e.g., due to
the spatial reuse of the frequencies). This happens for example in
femtocell networks, where femto access points (FAPs) serve femto users,
subject to some constraints on the interference radiated toward mobile
users served by macro base stations. \vspace{-0.2cm}

\subsection{The spectrum sensing problem\label{sub:Detection-problem}\vspace{-0.2cm}}

For notational simplicity, we preliminarily introduce the sensing
problem in a CR scenario composed of one active PU and under some
simplified assumptions; a more general setting (e.g., multiple active
PUs) is addressed in Sec. \ref{sec:Extensions-and-Generalizations}
within the framework of robust detection. 

The spectrum sensing problem of SU $q=1,\ldots,Q$ on subcarrier $k=1,\ldots,N$
is formulated as a binary hypothesis testing, with the following two
sets of hypotheses: at time index $n=1,2,\ldots,K_{q},$\emph{\vspace{-.2cm}}
\begin{equation}
\begin{array}{l}
\mbox{(PU signal absent) }\qquad\,\,\mathcal{H}_{k|0}:\quad y_{q,k}[n]=w_{q,k}[n]\\
\mbox{(PU signal present) }\qquad\mathcal{H}_{k|1}:\quad y_{q,k}[n]=I_{q,k}[n]+w_{q,k}[n],\vspace{-.2cm}
\end{array}\label{eq:Hyp_testing_composite}
\end{equation}
where $\mathcal{H}_{k|0}$ represents the absence of any primary signal
over the subcarrier $k$$-$the received baseband complex signal $y_{q,k}[n]$
contains only additive background noise $w_{q,k}[n]$$-$and $\mathcal{H}_{k|1}$
represents the presence of (at least) one PU$-$the received signal
contains the primary signaling $I_{q,k}[n]$ corrupted by noise$-$and
$K_{q}=\left\lfloor \tau_{q}\, f_{q}\right\rfloor \simeq\tau_{q}f_{q}$
is the number of samples, with $\tau_{q}$ and $f_{q}$ denoting the
sensing time and the sampling frequency, respectively.

Given the hypothesis testing problem in (\ref{eq:Hyp_testing_composite}),
we make the following standard assumptions in the literature of sensing
algorithms \cite{Quan-Cui-Poor-Sayed,Quan-Cui-Poor-Sayed_TSP09,Liang-Zeng-Peh-Hoang_TWC08,Kim-Giannakis08,Rongfei-Hai_TWC10,Hoseini_Beaulieu_ICC10,Kang-Liang-Garg-ZhangVT09,Pei-Liang-Teh-Li_TWC09,Barbarossa-Sardellitti-Scutari_CAMSAP09}:
$(A1)$ the noise $w_{q,k}[n]$ is drawn by a circularly symmetric
complex stationary (ergodic) Gaussian process with zero mean and variance
$\sigma_{q,k}^{2}$; $(A2)$ the primary signaling $I_{q,k}[n]$ are
samples of a circularly symmetric complex stationary (ergodic) random
process with zero mean, variance $\sigma_{I_{q,k}}^{\text{2}}$, and
statistically independent of the noise process; $(A3)$ the sample
frequency $f_{q}$ is chosen so that the random variables (RVs) $\{y_{q,k}[n]\}_{n}$
are independent, for all $q$ and $k$;%
\footnote{We can relax the constraint on $f_{q}$ by requiring in (A3) instead
block independence of the RVs $\{y_{q,k}[n]\}_{n}$, and invoke for
our derivations the central limit theorem valid for correlated ($m$-indepedend)
RVs, as stated, e.g., in \cite[Th. 2.8.1]{Lehmann_Devbook98}; because
of the space limitation, we omit these details here and stay within
the original assumption (A3). %
} $(A4)$ the system parameters$-$the spectrum occupancy status of
the PUs and the primary/secondary (cross-)channels$-$change sufficiently
slowly such that they can be assumed to be constant over each sensing/transmission
interval; and $(A5)$ the noise variance $\sigma_{q,k}^{2}$ and the
power $\sigma_{I_{q,k}}^{\text{2}}$ of the primary signaling are
estimate with no errors at the secondary receiver $q$. Note that
$(A5)$ can be relaxed by taking explicitly into account estimation
errors in the knowledge of $\sigma_{q,k}^{2}$ and $\sigma_{I_{q,k}}^{\text{2}}$
as well as the effect of shadowing and fading, as outlined in Sec.
\ref{sec:Extensions-and-Generalizations}; see also \cite{Scutari-Pang_DSP11}. 

Within the Neyman-Pearson framework, the test statistic of SU $q$
over subcarrier $k$ based on the energy detector maximizing the detection
probability under a given false alarm rate is \cite{Levy-book-Det-Est08}:\emph{\vspace{-.2cm}}
\begin{equation}
D\left(\mathbf{Y}_{q,k}\right)\triangleq\dfrac{{1}}{K_{q}}\sum_{n=1}^{K_{q}}\left|y_{q,k}[n]\right|^{2}\begin{array}{c}
\overset{\mathcal{H}_{k|1}}{>}\vspace{-0.4cm}\\
\underset{\mathcal{H}_{k|0}}{<}
\end{array}\gamma_{q,k}\vspace{-0.3cm}\label{eq:energy_detector_test}
\end{equation}
where $\gamma_{q,k}$ is the decision threshold of SU $q$ for the
carrier $k=1,\ldots,N$, to be chosen to meet the required false alarm
rate. Note that if the received samples can be assumed Gaussian distributed
(and uncorrelated), the above energy-based decision is optimal; otherwise
it is still a valuable choice when no a-priori information is available
on the primary signal features. Moreover, it is interesting to report
some results in \cite{SahaiHovenTandra04,AxellLarsson_jsac11}, where
it has been argued that for several models, if the probability density
functions under both hypotheses are perfectly known, energy detection
performs close to the optimal detector. The case of partial knowledge
of the probability density functions is discussed in Sec. \ref{sec:Extensions-and-Generalizations}
within the context of robust detection; see also \cite{Scutari-Pang_DSP11}. 

Invoking the central limit theorem, the random variable $D\left(\mathbf{Y}_{q,k}\right)$
can be approximated for sufficiently large $K_{q}=\tau_{q}\, f_{q}$
by a Gaussian distribution: for $i=1,2$, $\left.D\left(\mathbf{Y}_{q,k}\right)\,\right|\mathcal{H}_{k|i}\sim\mathcal{N}\left(\mu_{q,k|i},\,\sigma_{q,k|i}^{2}/K_{q}\right)$,
where\vspace{-0.5cm} 
\[
\mu_{q,k|i}\triangleq\left\{ \begin{array}{lll}
\sigma_{q,k}^{2}, &  & \mbox{if }i=0,\\
\sigma_{I_{q,k}}^{2}+\sigma_{q,k}^{2}, &  & \mbox{if }i=1,
\end{array}\right.\quad\mbox{and}\quad\sigma_{q,k|i}^{2}\triangleq\left\{ \begin{array}{lll}
{\sigma_{q,k}^{4}}, &  & \mbox{if }i=0\\
\mbox{E}\left|I_{q,k}\right|^{4}+2\sigma_{q,k}^{2}-\left(\sigma_{I_{q,k}}^{2}-\sigma_{q,k}^{2}\right)^{2}, &  & \mbox{if }i=1.
\end{array}\right.
\]
An explicit expression of the expected value $\mbox{E}|I_{q,k}|^{4}$
above depends on the PUs' signaling features. For example, if the
primary signals are drawn from a PSK modulation, then $\sigma_{q,k|1}^{2}=\sigma_{q,k}^{2}\left(2\,\sigma_{I_{q,k}}^{2}+\sigma_{q,k}^{2}\right)$
\cite{Quan-Cui-Poor-Sayed}; whereas if the PU signaling is assumed
to be Gaussian, then $\sigma_{q,k|1}^{2}=\left(\sigma_{I_{q,k}}^{2}+\sigma_{q,k}^{2}\right)^{2}$
\cite{Liang-Zeng-Peh-Hoang_TWC08}. 

The performance of the energy detector scheme are measured in terms
of the detection probability $P_{q,k}^{\text{\,{d}}}(\gamma_{q,k},\,\tau_{q}\,)\triangleq\text{{Prob}}\left\{ D(\mathbf{Y}_{q,k})\,>\gamma_{q,k}\,|\,\mathcal{H}_{1,k}\right\} $
and false alarm probability $P_{q,k}^{\,\text{{fa}}}(\gamma_{q,k},\,\tau_{q}\,)\triangleq\text{{Prob}}\{D(\mathbf{Y}_{q,k})>\gamma_{q,k}\,|\,\mathcal{H}_{0,k}\}$
that, under the assumptions above, are given by\vspace{-0.5cm}

\begin{equation}
P_{q,k}^{\,\text{{fa}}}\left(\gamma_{q,k},\,\tau_{q}\right)=\mathcal{Q}\left(\sqrt{\tau_{q}\, f_{q}}\,\dfrac{{\gamma_{q,k}\,-\mu_{q,k|0}}}{{\sigma_{q,k|0}}}\right)\,\,\,\mbox{and}\,\,\, P_{q,k}^{\text{\,{d}}}\left(\gamma_{q,k},\,\tau_{q}\right)=\mathcal{Q}\left(\sqrt{\tau_{q}\, f_{q}}\,\dfrac{{\gamma_{q,k}\,-\mu_{q,k|1}}}{{\sigma_{q,k|1}}}\right),\label{eq:pfa_and_pd}
\end{equation}
where $\mathcal{\mathcal{Q}}(x)\triangleq(1/\sqrt{{2\pi}})\int_{x}^{\infty}e^{-t^{2}/2}dt$
is the Q-function. The interpretation of $P_{q,k}^{\,\text{{fa}}}\left(\gamma_{q,k},\,\tau_{q}\right)$
and $P_{q,k}^{\text{\,{d}}}\left(\gamma_{q,k},\,\tau_{q}\right)$
within the CR scenario is the following: $1-P_{q,k}^{\,\text{{fa}}}$
signifies the probability of successfully identifying from the SU
$q$ a spectral hole over carrier $k$, whereas the missed detection
probability $P_{q,k}^{\text{\,{miss}}}\triangleq1-P_{q,k}^{\text{\,{d}}}$
represents the probability of SU $q$ failing to detect the presence
of the PUs on the subchannel $k$ and thus generating interference
against the PUs. The free variables to optimize are the detection
thresholds $\gamma_{q,k}$'s and the sensing times $\tau_{q}$'s;
ideally, we would like to choose $\gamma_{q,k}$'s and $\tau_{q}$'s
in order to minimize both $P_{q,k}^{\,\text{{fa}}}$ and $P_{q,k}^{\text{\,{miss}}}$,
but (\ref{eq:pfa_and_pd}) shows that there exists a trade-off between
these two quantities that will affect both primary and secondary performance.
It turns out that, $\gamma_{q,k}$'s and $\tau_{q}$'s can not be
chosen by focusing only on the detection problem (as in classical
decision theory), but the optimal choice of $\gamma_{q,k}$ and $\tau_{q}$
must be the result of a \emph{joint} optimization of the sensing and
transmission strategies over the two phases; such a novel optimization
is formulated in Sec. \ref{sub:-Game-theoretical-formulation}. \vspace{-0.2cm}

\subsection{The transmission phase\label{sub:The-transmission-phase}\vspace{-0.1cm}}

We model the set of the $Q$ active SUs as a frequency-selective $N$-parallel
interference channel, where $N$ is the number of available subcarriers;
no interference cancellation is performed at the secondary receivers
and the Multi-User Interference (MUI) is treated as additive colored
noise at each receiver. This model is suitable for most of the CR
scenarios, where the SUs coexisting in the network operate in a uncoordinated
manner without cooperating with each other, and no centralized authority
is assumed to handle the network access for the SUs. The transmission
strategy of each SU $q$ is then the power allocation vector $\mathbf{p}_{q}=\{p_{q,k}\}_{k=1}^{N}$
over the $N$ subcarriers ($p_{q,k}$ is the power allocated over
carrier $k$), subject to the following transmit power constraints\vspace{-0.2cm}
\begin{equation}
{\mathcal{P}}_{q}\triangleq\left\{ \mathbf{p}_{q}\in\mathbb{R}^{N}\,:\,\sum_{k=1}^{N}p_{q,k}\leq P_{q},\quad\mathbf{0}\leq\mathbf{p}_{q}\leq\mathbf{p}_{q}^{\max}\right\} ,\vspace{-0.1cm}\label{set_P_q}
\end{equation}
where we also included (possibly) local spectral mask constraints
$\mathbf{p}_{q}^{\max}=(p_{q,k}^{\max})_{k=1}^{N}$ {[}the vector
inequality in (\ref{set_P_q}) is component-wise{]}. 

\noindent \textbf{Opportunistic throughput.} The goal of each SU
is to optimize his ``opportunistic spectral utilization'' of the
licensed spectrum, \emph{having no a priori knowledge of the probabilities
on the PUs' presence}. According to this paradigm, each subcarrier
$k$ is available for the transmission of SU $q$ if no primary signal
is detected over that frequency band, which happens with probability
$1-P_{q,k}^{\,\text{{fa}}}$. This motivates the use of the \emph{aggregate
opportunistic throughput }as a measure of the spectrum efficiency
of each SU $q$. Given the power allocation profile $\mathbf{p}\triangleq(\mathbf{p}_{q})_{q=1}^{Q}$
of the SUs, the detection thresholds $\boldsymbol{{\gamma}}_{q}\triangleq\left(\gamma_{q,k}\right)_{k=1}^{N}$,
and sensing time $\tau_{q}$, the aggregate opportunistic throughput
of SU $q$ is defined as\vspace{-0.2cm}
\begin{equation}
R_{q}\left(\tau_{q},\,\mathbf{p},\,\boldsymbol{{\gamma}}_{q}\right)=\left(1-\dfrac{\tau_{q}}{T}\right)\,\sum_{k=1}^{N}\left[1-P_{q,k}^{\,\text{{fa}}}\left(\gamma_{q,k},\,\tau_{q}\right)\right]\, r_{q,k}\left(\mathbf{p}\right)\label{eq:Opportunistic-throughtput}
\end{equation}
where $1-\tau_{q}/T$ ($\tau_{q}\leq T$) is the portion of the frame
duration available for opportunistic transmissions, $P_{q,k}^{\,\text{{fa}}}\left(\gamma_{q,k},\,\tau_{q}\right)$
is the worst-case false alarm rate defined in (\ref{eq:pfa_and_pd}),
and $r_{q,k}\left(\mathbf{p}\right)$ is the maximum information rate
achievable on secondary link $q$ over carrier $k$ \emph{when no
primary signal is detected} and the power allocation of the SUs is
$\mathbf{p}$. Under basic information theoretical assumptions, the
maximum achievable rate $r_{q,k}\left(\mathbf{p}\right)$ for a specific
power allocation profile $p_{1,k},\dots,p_{Q,k}$ is\vspace{-0.2cm}
\begin{equation}
r_{q,k}\left(\mathbf{p}\right)=\log\left(1+\dfrac{|H_{qq}(k)|^{2}\, p_{q,k}}{{\sigma}_{q,k}^{2}+\sum_{r\neq q}|H_{rq}(k)|^{2}p_{r,k}}\right),\label{eq:rate_FSIC}
\end{equation}
where $\{H_{qq}(k)\}_{k=1}^{N}$ is the channel transfer function
of the direct link $q$ and $\{H_{rq}(k)\}_{k=1}^{N}$ is the cross-channel
transfer function between the secondary transmitter $r$ and the secondary
receiver $q$; and $\sigma_{q,k}^{2}$ is the variance of the background
noise over carrier $k$ at the receiver $q$ (assumed to be Gaussian
zero-mean distributed).

\smallskip{}

\noindent \textbf{Probabilistic interference constraints.} Due to
the inherent trade-off between $P_{q,k}^{\,\text{{fa}}}$ and $P_{q,k}^{\,\text{{miss}}}$
(cf. Sec. \ref{sub:Detection-problem}), maximizing the aggregate
opportunistic throughput (\ref{eq:Opportunistic-throughtput}) of
SUs will result in low $P_{q,k}^{\,\text{{fa}}}$ and thus large $P_{q,k}^{\,\text{{miss}}}$,
hence causing harmful interference to PUs (which happens with probability
$P_{q,k}^{\text{{miss}}}$). To control the interference radiated
by the SUs, we propose to impose \emph{probabilistic} interference
constraints in the form of \emph{individual }and/or \emph{global constraints}.
Individual interference constraints are imposed at the level of each
SU $q$ to limit the average interference generated at the primary
receiver, whereas global interference constraints limit the average
aggregate interference generated by all the SUs, which is in fact
the average interference experimented by the PU. Examples of such
constraints are the following. 
\begin{description}
\item [{\emph{-}}] \emph{Individual interference constraints:\vspace{-0.5cm}}
\end{description}
\begin{equation}
\sum_{k=1}^{N}P_{q,k}^{\,\text{{miss}}}\left(\gamma_{q,k},\,\tau_{q}\right)\cdot w_{q,k}\cdot p_{q,k}\leq I_{q}^{\max},\vspace{-0.3cm}\label{eq:individual_overal_interference_constraint}
\end{equation}

\begin{description}
\item [{\emph{-}}] \emph{Global interference constraints},\emph{\vspace{-0.3cm}}
\begin{equation}
\sum_{q=1}^{Q}\sum_{k=1}^{N}P_{q,k}^{\,\text{{miss}}}\left(\gamma_{q,k},\,\tau_{q}\right)\cdot w_{q,k}\cdot p_{q,k}\leq I^{\text{\ensuremath{\max}}},\vspace{-0.3cm}\label{eq:global_interference_constraints_2}
\end{equation}

\end{description}
where $I_{q}^{\max}$ {[}or $I^{\max}$ {]} is the maximum average
interference allowed to be generated by the SU $q$ {[}or all the
SU's{]}, and $w_{qk}$'s are a given set of positive weights. If an
estimate of the cross-channel transfer functions $\{G_{q}(k)\}_{k=1}^{N}$
between the secondary transmitters $q$'s and the primary receiver
is available, then the natural choice of $w_{qk}$ is $w_{qk}=|G_{q}(k)|^{2}$,
so that (\ref{eq:individual_overal_interference_constraint}) and
(\ref{eq:global_interference_constraints_2}) become the average interference
experienced at the primary receiver. In some scenarios where the primary
receivers have a fi{}xed geographical location, it may be possible
to install some monitoring devices close to each primary receiver
having the functionality of (cross-)channel measurement and estimation
of $I_{q}^{\text{{max}}}$'s. In scenarios where this option is not
feasible and the channel state information cannot be acquired, a different
choice of the weights coefficients $w_{q,k}$'s and the interference
threshold $I_{q}^{\max}$ in (\ref{eq:individual_overal_interference_constraint})
and (\ref{eq:global_interference_constraints_2}) can be made, based
e.g. on worst-case channel/interference statistics; several alternative
options are discussed in our companion paper \cite{Pang-Scutari-NNConvex_PII},
which is devoted to the design of distributed solution algorithms
along with their practical implementation.\emph{\vspace{-.2cm}}

\begin{remark}[\textbf{Individual vs. global constraints}]\rm \label{Remark_local_vs_global}The
proposed individual and/or global interference constraints provide
enough flexibility to explore the interplay between performance and
signaling among the SUs, making thus the proposed model applicable
to different CR scenarios. For instance, in the settings where the
SUs cannot exchange any signaling, the system design based on individual
interference constraints seems to be the most natural formulation
(see Sec. \ref{sub:Game-with-individual_constraints}); this indeed
leads to totally distributed algorithms with no signaling among the
SUs, as we show in the companion paper \cite{Pang-Scutari-NNConvex_PII}.
On the other hand, there are scenarios where the SUs may want to trade
some signaling for better performance; in such cases imposing global
interference constraints rather than the more conservative individual
constraints is expected to provide larger SUs' throughputs. This however
introduces a new challenge: how to enforce global interference constraints
in a distributed way? By imposing a coupling among the power allocations
of all the SUs, global interference constraints in principle would
call for a centralized optimization. A major contribution of the paper
is to propose a pricing mechanism based on the relaxation of the coupling
interference constraints as penalty term in the SUs' objective functions
(see Sec. \ref{sub:Game-with-pricing}). In the companion paper \cite{Pang-Scutari-NNConvex_PII},
we show that this formulation leads to (fairly) distributed best-response
algorithms where the SUs can update the prices via consensus based-schemes,
which requires a limited signaling among the SUs, in favor of better
performance.\end{remark}\vspace{-.5cm}

\section{Joint Optimization of Sensing and Transmissions via Game Theory\label{sub:-Game-theoretical-formulation}\emph{\vspace{-.1cm}}}

We focus now on the system design and formulate the joint optimization
of the sensing parameters and the power allocation of the SUs within
the framework of game theory. Motivated by the discussion in the previous
section (cf. Remark \ref{Remark_local_vs_global}), we propose next
two classes of equilibrium problems: i) games with individual constraints
only; and ii) games with individual \emph{and} global constraints.
The former formulation is suitable for modeling scenarios where the
SUs are selfish users who are not willing to cooperate, whereas the
latter class of games is applicable to the design of systems where
the SUs can exchange limited signaling in favor of better performance.
At the best of our knowledge, both formulations based on the \emph{joint}
optimization of sensing and transmission strategies are novel in the
literature. We introduce first the two formulations in Sec. \ref{sub:Game-with-individual_constraints}
and Sec. \ref{sub:Game-with-pricing} below, and then provide a unified
analysis of both games. \emph{\vspace{-.2cm}}

\subsection{Game with individual constraints \label{sub:Game-with-individual_constraints}\emph{\vspace{-.2cm}}}

In the proposed game, each SU is modeled as a player who aims to maximize
his own opportunistic throughput $R_{q}\left(\mathbf{p},\,\boldsymbol{{\gamma}}_{q},\,\tau_{q}\right)$
by choosing \emph{jointly} a proper power allocation strategy $\mathbf{p}_{q}=(p_{q,k})_{k=1}^{N}$,
detection thresholds $\boldsymbol{{\gamma}}_{q}\triangleq(\gamma_{q,k})_{k=1}^{N}$,
and sensing time $\tau_{q}$, subject to power and individual probabilistic
interference constraints. Stated in mathematical terms, player $q$'s
optimization problem is to determine, for given $\mathbf{p}_{-q}\triangleq((p_{r,k})_{k=1}^{N})_{q\neq r=1}^{Q}$,
a tuple $(\tau_{q},\,\mathbf{p}_{q},\,\boldsymbol{{\gamma}}_{q})$
such that

\framebox{\begin{minipage}[t]{0.93\columnwidth}%
\begin{equation}
\begin{array}{ll}
{\displaystyle {\operatornamewithlimits{\mbox{maximize}}_{\tau_{q},\mathbf{p}_{q},\boldsymbol{{\gamma}}_{q}}}} & R_{q}\left(\tau_{q},\,\mathbf{p},\,\boldsymbol{{\gamma}}_{q}\right)\vspace{-0.5cm}\\[0.25in]
\mbox{subject to} & \vspace{-.3cm}\\[5pt]
\mbox{{\bf (a)}} & \begin{array}{l}
{\displaystyle {\sum_{k=1}^{N}}\, P_{q,k}^{\text{{miss}}}(\gamma_{q,k},\tau_{q})\cdot w_{q,k}\cdot p_{q,k}\,\leq\, I_{q}^{\text{{max}}}},\\[0.3in]\end{array}\vspace{-1cm}\\[0.5in]
\mbox{{\bf (b)}} & \begin{array}{l}
P_{q,k}^{\text{{fa}}}(\gamma_{q,k},\tau_{q})\,\leq\,\beta_{q,k},\quad\mbox{and}\quad P_{q,k}^{\text{{miss}}}(\gamma_{q,k},\tau_{q})\,\leq\,\alpha_{q,k},\quad\forall k\,=\,1,\cdots,N,\\[0.15in]\end{array}\vspace{-1.2cm}\\[0.5in]
\mbox{{\bf (c)}} & \,\,{\displaystyle \mathbf{p}_{q}\in\mathcal{P}_{q}\epc\mbox{and}\epc\tau_{q}^{\min}\,\leq\,\tau_{q}\,\leq\,\tau_{q}^{\max}.}\vspace{.1cm}
\end{array}\label{eq:player q_individual_interference_constraints}
\end{equation}
\end{minipage}}\emph{\vspace{.3cm}}

In (\ref{eq:player q_individual_interference_constraints}) we also
included additional lower and upper bounds of $\tau_{q}$ satisfying
$0<\tau_{q}^{\min}<\tau_{q}^{\max}<T_{q}$ and upper bounds on detection
and missed detection probabilities $0<\alpha_{q,k}\leq1/2$ and $0<\beta_{q,k}\leq1/2$,
respectively. These bounds provide additional degrees of freedom to
limit the probability of interference to the PUs as well as to maintain
a certain level of opportunistic spectrum utilization from the SUs
{[}$1-P_{q,k}^{\text{{fa}}}\geq1-\beta_{q,k}${]}. Note that the constraints
$\alpha_{q,k}\leq1/2$ and $\beta_{q,k}\leq1/2$ do not represent
a real loss of generality, because practical CR systems are required
to satisfy even stronger constraints on false alarm and detection
probabilities; for instance, in the WRAN standard, $\alpha_{q,k}=\beta_{q,k}=0.1$
\cite{FCC_WGWRAN}.\emph{\vspace{-.2cm}}

\subsection{Game with pricing\label{sub:Game-with-pricing}\emph{\vspace{-.2cm}}}

We add now global interference constraints to the game theoretical
formulation. In order to enforce coupling constraints while keeping
the optimization as decentralized as possible, we propose the use
of a pricing mechanism through a penalty in the payoff function of
each player, so that the interference generated by all the SUs will
depend on these prices. Prices are thus addition variables to be optimized
(there is one common price associated to any of the global interference
constraints); they must be chosen so that any solution of the game
will satisfy the global interference constraints, which requires the
introduction of additional constraints on the prices, in the form
of price clearance conditions. Denoting by $\pi$ the price variable
associated with the global interference constraint (\ref{eq:global_interference_constraints_2}),
we have the following formulation. \vspace{0.2cm}

\framebox{\begin{minipage}[t]{0.93\columnwidth}%
Player $q$'s optimization problem is to determine, for given $\mathbf{p}_{-q}$
and $\pi$, a tuple $(\tau_{q},\,\mathbf{p}_{q},\,\boldsymbol{{\gamma}}_{q})$
such that\vspace{-0.2cm} 
\begin{equation}
\begin{array}{ll}
{\displaystyle {\operatornamewithlimits{\mbox{maximize}}_{\tau_{q},\mathbf{p}_{q},\boldsymbol{{\gamma}}_{q}}}} & R_{q}\left(\tau_{q},\,\mathbf{p},\,\boldsymbol{{\gamma}}_{q}\right)-\pi\cdot{\displaystyle {\displaystyle {\sum_{k=1}^{N}}\, P_{q,k}^{\text{{miss}}}(\gamma_{q,k},\tau_{q})\cdot w_{q,k}\cdot p_{q,k}\vspace{-1cm}}}\\[0.25in]
\mbox{subject to} & \mbox{constraints (a), (b), (c) as in }(\ref{eq:player q_individual_interference_constraints}).\\[5pt]
\end{array}\vspace{-0.2cm}\label{eq:player q}
\end{equation}
The price obeys the following complementarity condition:\vspace{-0.4cm}
\begin{equation}
0\,\leq\,\pi\,\perp\, I^{\text{{max}}}-{\sum_{k=1}^{N}}\,{\displaystyle {\sum_{q=1}^{Q}}{\displaystyle \, P_{q,k}^{\text{{miss}}}(\gamma_{q,k},\tau_{q})\cdot w_{q,k}\cdot p_{q,k}}\,\geq\,0}.\label{eq:price equilibrium}
\end{equation}
\end{minipage}}\vspace{0.3cm}

In (\ref{eq:price equilibrium}), the compact notation $0\leq a\perp b\geq0$
means $a\geq0$, $b\geq0$, and $a\,\cdot\, b=0$. The price clearance
conditions (\ref{eq:price equilibrium}) state that global interference
constraints (\ref{eq:global_interference_constraints_2}) must be
satisfied together with nonnegative price; in addition, they imply
that if the global interference constraint holds with strict inequality
then the price should be zero (no penalty is needed). Thus, at any
solution of the game, the optimal price is such that the interference
constraint is satisfied. In the companion paper \cite{Pang-Scutari-NNConvex_PII},
we show that a pricing-based game as (\ref{eq:player q})-(\ref{eq:price equilibrium})
can be solved using best-response iterative algorithms, where the
price $\pi$ is updated by the SUs themselves via consensus schemes;
which require a limited signaling exchange only among neighboring
nodes. In Sec. \ref{sec:Numerical-Results} we show that when the
SUs are willing to collaborate in the form described above, they reach
higher throughputs, which motivates the formulation (\ref{eq:player q})-(\ref{eq:price equilibrium}). 

Note that when the price $\pi$ is set to zero, the game (\ref{eq:player q})-(\ref{eq:price equilibrium})
reduces to the NEP in (\ref{eq:player q_individual_interference_constraints})
where only the individual interference constraints (\ref{eq:individual_overal_interference_constraint})
are imposed. Therefore, hereafter we focus only on (\ref{eq:player q})-(\ref{eq:price equilibrium}),
as a unified formulation including also the game (\ref{eq:player q_individual_interference_constraints})
as special case (when $\pi=0$). Note that when we deal with the general
formulation (\ref{eq:player q})-(\ref{eq:price equilibrium}), we
make the blanket assumption that none of the power/interference constraints
are redundant. \emph{\vspace{-.2cm}}

\subsection{Equivalent reformulation of the games\emph{ }and main notation\emph{\vspace{-.2cm}}}

Before starting the analysis of the proposed games, we rewrite the
players' optimization problems (\ref{eq:player q}) with the side
constraint (\ref{eq:price equilibrium}) in a more convenient and
equivalent form. The principal advantage of the reformulation is that
it converts the nonlinear constraints (b) and (c) into linear constraints;
thus (i) facilitating the application of the VI approach to the analysis
of the game, and (ii) shedding some light on the interpretation of
the new concepts and results that will be introduced later on. 

At the heart of the aforementioned equivalence is a proper change
of variables: instead of working with the original variables $\gamma_{q,k}$'s
and $\tau_{q}$'s, we introduce the new variables $\wh{\gamma}{}_{q,k}$'s
and $\wh{\tau}_{q}$'s, defined as
\begin{equation}
(\gamma_{q,k},\,\tau_{q})\mapsto(\wh{\gamma}_{q,k},\,\wh{\tau}_{q}):\hspace{1em}\wh{\tau}_{q}\triangleq\sqrt{\tau_{q}\, f_{q}}\,\quad\mbox{and}\quad\wh{\gamma}{}_{q,k}\triangleq\sqrt{\tau_{q}\, f_{q}}\,\dfrac{{\gamma_{q,k}\,-\mu_{q,k|0}}}{{\sigma_{q,k|0}}}.\label{eq:bijection}
\end{equation}
Note that the above transformation is a one-to-one mapping for any
$\gamma_{q,k}\geq0$ and $\tau_{q}>0$; we can thus reformulate the
games using the new variables without loss of generality. Moreover,
a property of (\ref{eq:bijection}) is that the constraints on $P_{q,k}^{\text{{fa}}}(\gamma_{q,k},\tau_{q})$
and $P_{q,k}^{\text{{miss}}}(\gamma_{q,k},\tau_{q})$ in each player's
optimization problem {[}see (\ref{eq:player q_individual_interference_constraints}){]}
become linear in the new variables $\wh{\gamma}_{q,k}$ and $\wh{\tau}_{q}$:
for each $k=1,\ldots,N$, we have
\begin{equation}
\begin{array}{lll}
P_{q,k}^{\text{{fa}}}(\gamma_{q,k},\tau_{q})\,\leq\,\beta_{q,k}, & \Leftrightarrow & \wh{\gamma}{}_{q,k}\geq\mathcal{Q}^{-1}\left(\beta_{q,k}\right)\bigskip\\
P_{q,k}^{\text{{miss}}}(\gamma_{q,k},\tau_{q})\,\leq\,\alpha_{q,k}, & \Leftrightarrow & {\displaystyle {\frac{\sigma_{{q,k}|0}\,\wh{\gamma}{}_{q,k}-(\,\mu_{{q,k}|1}-\mu_{{q,k}|0}\,)\,\wh{\tau}_{q}}{\sigma_{{q,k}|1}}}}\leq\mathcal{Q}^{-1}\left(1-\alpha_{q,k}\right),
\end{array}\label{eq:ineq_new_variables}
\end{equation}
where $\mathcal{Q}^{-1}\left(\cdot\right)$ denotes the inverse of
the Q-function {[}$\mathcal{Q}(x)$ is a strictly decreasing function
on $\mathbb{R}${]}. Using the above transformation, we can equivalently
rewrite the false-alarm rate $P_{q,k}^{\text{{fa}}}(\gamma_{q,k},\tau_{q})$,
the missed detection probability $P_{q,k}^{\text{{miss}}}(\gamma_{q,k},\tau_{q})$,
and the throughput $R_{q}(\tau_{q},\,\mathbf{p},\,\boldsymbol{{\gamma}}_{q})$
of each player $q$ in terms of the new variables $\wh{\boldsymbol{{\gamma}}}_{q}\triangleq(\wh{\gamma}_{q,k})_{k=1}^{N}$'s
and $\wh{\tau}_{q}$'s, denoted by $\wh{P}_{q,k}^{\text{{fa}}}(\wh{\gamma}_{q,k},\tau_{q})$,
$\wh{P}_{q,k}^{\text{{miss}}}(\wh{\gamma}_{q,k},\tau_{q})$, and $\hat{R}_{q}(\wh{\tau}_{q},\,\mathbf{p},\,\wh{\boldsymbol{{\gamma}}}_{q})$,
respectively; the explicit expression of these quantities is:\vspace{-0.3cm}
\begin{equation}
P_{q,k}^{\text{{fa}}}(\gamma_{q,k},\tau_{q})=\wh{P}_{q,k}^{\text{{fa}}}(\wh{\gamma}_{q,k})\triangleq\mathcal{Q}(\wh{\gamma}_{q,k})\label{eq:P_fa_new}
\end{equation}
\begin{equation}
P_{q,k}^{\text{{miss}}}(\gamma_{q,k},\tau_{q})=\wh{P}_{q,k}^{\text{{miss}}}(\wh{\gamma}_{q,k},\wh{\tau}_{q})\triangleq\mathcal{Q}\left({\displaystyle {\frac{\sigma_{{q,k}|0}\,\wh{\gamma}{}_{q,k}-(\,\mu_{{q,k}|1}-\mu_{{q,k}|0}\,)\,\wh{\tau}_{q}}{\sigma_{{q,k}|1}}}}\right)\label{eq:P_miss_new}
\end{equation}
\begin{equation}
R_{q}(\tau_{q},\,\mathbf{p},\,\boldsymbol{{\gamma}}_{q})=\hat{R}_{q}(\wh{\tau}_{q},\,\mathbf{p},\,\wh{\boldsymbol{{\gamma}}}_{q})\triangleq\left(\,1-{\displaystyle {\frac{\wh{\tau}_{q}^{2}}{f_{q}\, T_{q}}}\,}\right)\,{\displaystyle {\sum_{k=1}^{N}}\,\left(\,1-\wh{P}_{q,k}^{\text{{fa}}}(\wh{\gamma}_{q,k},\wh{\tau}_{q})\,\right)\, r_{q,k}\left(\mathbf{p}\right).}\label{eq:R_q_new}
\end{equation}
Using (\ref{eq:ineq_new_variables})-(\ref{eq:R_q_new}), the game
(\ref{eq:player q})-(\ref{eq:price equilibrium}) in the original
players' variables $\left(\tau_{q},\mathbf{p}_{q},\boldsymbol{{\gamma}}_{q}\right)$'s
{[}and thus also (\ref{eq:player q_individual_interference_constraints}){]}
can be equivalently rewritten in the new variables $\left(\hat{{\tau}}_{q},\mathbf{p}_{q},\hat{{\boldsymbol{{\gamma}}}}_{q}\right)$'s
as :\vspace{-0.9cm}

\begin{center}
\framebox{\begin{minipage}[t]{1\columnwidth}%
\textbf{Players' optimization} \textbf{problems}. The optimization
problem of player $q$ is:\vspace{-0.2cm}
\begin{equation}
\begin{array}{ll}
{\displaystyle {\operatornamewithlimits{\mbox{maximize}}_{\wh{\tau}_{q},\,\mathbf{p}_{q},\,\hat{{\boldsymbol{{\gamma}}}}_{q}}}} & \hat{R}_{q}\left(\wh{\tau}_{q},\,\mathbf{p},\,\hat{{\boldsymbol{{\gamma}}}}_{q}\right)-{\displaystyle {\displaystyle \pi\,\cdot\sum_{k=1}^{N}\wh{P}_{q,k}^{\text{{miss}}}\left(\wh{\gamma}_{q,k},\wh{\tau}_{q}\right)\cdot w_{q,k}\cdot p_{q,k}\vspace{-0.6cm}}}\\[0.25in]
\mbox{subject to}\vspace{-0.2cm}\\[5pt]
\mbox{{\bf (\ensuremath{\wh{\mathbf{a}}})}} & \begin{array}{l}
{\displaystyle {\sum_{k=1}^{N}}\,\wh{P}_{q,k}^{\text{{miss}}}\left(\wh{\gamma}_{q,k},\wh{\tau}_{q}\right)\cdot w_{q,k}\cdot p_{q,k}\,\leq\,{I}_{q}^{\text{\ensuremath{\max}}}}\\[0.3in]\end{array}\vspace{-0.8cm}\\[0.5in]
\mbox{{\bf (\ensuremath{\wh{\mathbf{b}}})}} & \left\{ \wh{\gamma}{}_{q,k}\geq\,\wh{\beta}_{q,k},\quad\dfrac{\sigma_{{q,k}|0}\,\wh{\gamma}{}_{q,k}-(\,\mu_{{q,k}|1}-\mu_{{q,k}|0}\,)\,\wh{\tau}_{q}}{\sigma_{{q,k}|1}}\leq\wh{\alpha}_{q,k}\right\} \quad\forall k=1,\ldots,N,\vspace{-0.8cm}\\[0.5in]
\mbox{{\bf (\ensuremath{\wh{\mathbf{c}}})}} & {\displaystyle \mathbf{p}_{q}\in\mathcal{P}_{q}\epc\mbox{and}\epc{\wh{\tau}_{q}^{\,\min}}\,\leq\,\wh{\tau}_{q}\,\leq\,{\wh{\tau}_{q}^{\,\max}},\smallskip}
\end{array}\label{eq:player q transformed_2}
\end{equation}
\textbf{Price equilibrium}. The price obeys the following complementarity
condition:\vspace{-0.8cm}

\begin{flushleft}
\begin{equation}
0\,\leq\,\pi\,\perp\, I^{\text{{max}}}-{\sum_{k=1}^{N}}\,{\displaystyle {\sum_{q=1}^{Q}}{\displaystyle \,\wh{P}_{q,k}^{\text{{miss}}}\left(\wh{\gamma}_{q,k},\wh{\tau}_{q}\right)\cdot w_{q,k}\cdot p_{q,k}}\,\geq\,0}\label{eq:price equilibrium_new}
\end{equation}

\par\end{flushleft}%
\end{minipage}}
\par\end{center}

\noindent where $\wh{\alpha}_{q,k}\triangleq\mathcal{Q}^{-1}(1-\alpha_{q,k})$,
 $\wh{\beta}_{q,k}\triangleq\mathcal{Q}^{-1}(\beta_{q,k})$, $\wh{\tau}_{q}^{\,\min}\triangleq\sqrt{f_{q}\,\tau_{q}^{\min}}$,
and $\wh{\tau}_{q}^{\,\max}\triangleq\sqrt{f_{q}\,\tau_{q}^{\max}}$;
and ($\wh{a}$), ($\wh{b}$), and ($\wh{c}$) in (\ref{eq:player q transformed_2})
are the image under the transformation (\ref{eq:bijection}) of (a),
(b), and (c) in (\ref{eq:player q_individual_interference_constraints}),
respectively {[}see (\ref{eq:bijection}){]}. 

Note that (even in the transformed domain), players' optimization
problems in (\ref{eq:player q transformed_2}) are still nonconvex,
with the nonconvexity occurring in the objective function and the
interference constraints ($\wh{a}$); the constraints ($\wh{b}$)
and ($\wh{c}$) are instead linear and thus convex; the polyhedra
in ($\wh{b}$) may be however empty. To explore such a structure,
as final step, we rewrite the game (\ref{eq:player q transformed_2})-(\ref{eq:price equilibrium_new})
in a more compact form, making explicit in the feasible set of each
player the polyhedral (convex) part {[}constraints ($\wh{b}$) and
($\wh{c}$){]} and the nonconvex part {[}constraint ($\wh{a}$){]}.
Let us denote by $\mathcal{X}_{q}$ the feasible set of player $q$
in (\ref{eq:player q transformed_2}), given by\vspace{-0.2cm}
\begin{equation}
\mathcal{X}_{q}\triangleq\left\{ (\wh{\tau}_{q},\,\mathbf{p}_{q},\,\wh{\boldsymbol{{\gamma}}}_{q})\in\mathcal{Y}_{q}\,\,|\,\, I_{q}(\wh{{\tau}_{q}},\,\mathbf{p}_{q},\,\wh{\boldsymbol{{\gamma}}}_{q})\leq0\right\} \vspace{-0.2cm}\label{eq:set_Xq}
\end{equation}
where we have separated the convex part and the nonconvex part; the
convex part is given by the polyhedron $\mathcal{Y}_{q}$ corresponding
to the constraints ($\wh{b}$) and ($\wh{c}$)\vspace{-0.2cm}
\begin{equation}
\hspace{-1em}\mathcal{Y}_{q}\triangleq\left\{ \begin{array}{ll}
(\wh{\tau}_{q},\,\mathbf{p}_{q},\,\wh{\boldsymbol{{\gamma}}}_{q})\,|\, & \wh{\gamma}_{q,k}\,\geq\,\wh{\beta}_{q,k},\quad{\displaystyle {\frac{{\sigma}_{{q,k}|0}\,\wh{\gamma}_{q,k}-(\,{\mu}_{{q,k}|1}-{\mu}_{{q,k}|0}\,)\,\wh{\tau}_{q}}{{\sigma}_{{q,k}|1}}}\,\leq\,\wh{\alpha}_{q,k}},\quad\forall k=1,\ldots,N\\
 & \mathbf{p}_{q}\in\mathcal{P}_{q},\qquad\epc\epc\wh{\tau}_{q}^{\,\min}\,\leq\,\wh{\tau}_{q}\,\leq\,\wh{\tau}_{q}^{\,\max}
\end{array}\right\} ,\vspace{-0.2cm}\label{eq:def_Y_q}
\end{equation}
whereas the nonconvex part is given by the constraint ($\wh{a}$)
that we have rewritten introducing the local interference violation
function\vspace{-0.3cm} 
\begin{equation}
I_{q}(\wh{{\tau}_{q}},\,\mathbf{p}_{q},\,\wh{\boldsymbol{{\gamma}}}_{q})\triangleq{\displaystyle {\sum_{k=1}^{N}}\,\wh{P}_{q,k}^{\text{{miss}}}\left(\wh{\gamma}_{q,k},\wh{\tau}_{q}\right)\cdot w_{q,k}\cdot p_{q,k}\,-\,{I}_{q}^{\text{\ensuremath{\max}}}},\vspace{-0.2cm}\label{eq:I_q}
\end{equation}
which measure the violation of the \emph{local} interference constraint
($\wh{a}$) at $(\wh{\tau}_{q},\,\mathbf{p}_{q},\,\wh{\boldsymbol{{\gamma}}}_{q})$.
Similarly, it is convenient to introduce also the \emph{global} interference
violation function $I(\wh{\boldsymbol{{\tau}}},\,\mathbf{p},\,\wh{\boldsymbol{{\gamma}}})$:\vspace{-0.4cm}

\begin{equation}
I(\wh{\boldsymbol{{\tau}}},\,\mathbf{p},\,\wh{\boldsymbol{{\gamma}}})\triangleq{\sum_{k=1}^{N}}\,{\displaystyle {\sum_{q=1}^{Q}}{\displaystyle \,\wh{P}_{q,k}^{\text{{miss}}}\left(\wh{\gamma}_{q,k},\wh{\tau}_{q}\right)\cdot w_{q,k}\cdot p_{q,k}}\,-I^{\text{{max}}}},\vspace{-.2cm}\label{eq:map_interference}
\end{equation}
which measures the violation of the \emph{shared} constraint at $(\wh{\boldsymbol{{\tau}}},\,\mathbf{p},\,\wh{\boldsymbol{{\gamma}}})$. 

Based on the above definitions, throughout the paper, we will use
the following notation. The convex part of the \emph{joint} strategy
set is denoted by $\mathcal{Y}\triangleq\prod_{q=1}^{Q}\mathcal{Y}_{q}$,
whereas the set containing all the (convex part of) players' strategy
sets except the $q$-th one is denoted by $\mathcal{Y}_{-q}\triangleq\prod_{r\neq q}\mathcal{Y}_{r}$;
similarly, we define $\mathcal{X}\triangleq\prod_{q=1}^{Q}\mathcal{X}_{q}$
and $\mathcal{X}_{-q}\triangleq\prod_{r\neq q}\mathcal{X}_{r}$. For
notational simplicity, when it is needed, we will use interchangeably
either $\left(\wh{\boldsymbol{{\tau}}}_{q},\,\mathbf{p}_{q},\,\wh{\boldsymbol{{\gamma}}}_{q}\right)$
or $\mathbf{x}_{q}\triangleq\left(\wh{\boldsymbol{{\tau}}}_{q},\,\mathbf{p}_{q},\,\wh{\boldsymbol{{\gamma}}}_{q}\right)$
to denote the strategy tuple of player $q$; similarly, $\mathbf{x}\triangleq\left(\wh{\boldsymbol{{\tau}}},\,\mathbf{p},\,\wh{\boldsymbol{{\gamma}}}\right)=(\mathbf{x}_{q})_{q=1}^{Q}$
will denote the strategy profile $\left(\wh{\boldsymbol{{\tau}}},\,\mathbf{p},\,\wh{\boldsymbol{{\gamma}}}\right)$
of all the players, with $\wh{\boldsymbol{{\tau}}}\triangleq(\wh{\tau}{}_{q})_{q=1}^{Q}$,
$\mathbf{p}\triangleq(\mathbf{p}_{q})_{q=1}^{Q},$ and $\wh{\boldsymbol{{\gamma}}}\triangleq(\wh{\boldsymbol{{\gamma}}}_{q})_{q=1}^{Q}$,
whereas the strategy profile of all the players except the $q$-th
one will be $\mathbf{x}_{-q}\triangleq\left(\wh{\boldsymbol{{\tau}}}_{r},\,\mathbf{p}_{r},\,\wh{\boldsymbol{{\gamma}}}_{r}\right)_{r=1,r\neq q}^{Q}$.
Using the above definitions, game (\ref{eq:player q transformed_2})-(\ref{eq:price equilibrium_new})
can be rewritten as\vspace{-0.9cm}

\begin{center}
\framebox{\begin{minipage}[t]{1\columnwidth}%
\textbf{Players' optimization}. The optimization problem of player
$q$ is: 
\begin{equation}
\begin{array}{lll}
\underset{\mathbf{x}_{q}}{\mbox{maximize}} &  & \hat{R}_{q}\,(\mathbf{x}_{q},\,\mathbf{x}_{-q})-{\pi}\cdot I(\mathbf{x}_{q},\,\mathbf{x}_{-q})\\
\mbox{subject to} &  & \mathbf{x}_{q}\triangleq\left(\wh{\boldsymbol{{\tau}}}_{q},\,\mathbf{p},\,\wh{\boldsymbol{{\gamma}}}_{q}\right)\in\mathcal{X}_{q}.
\end{array}\vspace{-0.1cm}\label{eq:player q transformed 1}
\end{equation}
\textbf{Price equilibrium}. The price obeys the following complementarity
condition:\vspace{-0.5cm}

\begin{equation}
0\,\leq\,{\pi}\,\perp\,-I\left(\mathbf{x}\right)\geq0.\label{eq:price equilibrium_vector_form}
\end{equation}
\end{minipage}}
\par\end{center}

Given the equivalence between (\ref{eq:player q transformed_2})-(\ref{eq:price equilibrium_new})
and (\ref{eq:player q transformed 1})-(\ref{eq:price equilibrium_vector_form}),
in the following we will focus on the game in the form (\ref{eq:player q transformed 1})-(\ref{eq:price equilibrium_vector_form})
w.l.o.g.. For future convenience, Table \ref{table_notation} collects
the main definitions and symbols used in (\ref{eq:player q transformed 1})-(\ref{eq:price equilibrium_vector_form}).
\vspace{-0.3cm}\hspace{-0.4cm}

\begin{table}[ht]\vspace{-0.1cm} 
\caption{Glossary of the  main notation of  game (\ref{eq:player q transformed 1})-(\ref{eq:price equilibrium_vector_form})} \vspace{0.3cm} 
\centering 
\vline\vline
\begin{tabular}{l l } 
\hline\hline                  
\hspace{1.5cm}Symbol & \hspace{3.2cm}Meaning \\
\hline\vspace{-0.3cm}\\
$\tau_q$ &   sensing time of SU $q$    \\ 
${\boldsymbol{\gamma}}_q=({\gamma}_{q,k})$ & detection thresholds  of SU $q$ over carriers $k=1,\ldots,N$ \\
$\mathbf{p}_q\triangleq (p_{q,k})_{k=1}^N$ & power allocation vector of SU $q$   \\$\pi$ & scalar price variable \\
$\wh{\tau}_q$ &   (transformed) sensing time of SU $q$  [(\ref{eq:bijection})]\\
$\wh{\boldsymbol{\gamma}}_q=(\wh{\gamma}_{q,k})$ & (transformed) detection thresholds  of SU $q$ over carriers $k=1,\ldots,N$  [(\ref{eq:bijection})]\\
$\wh{P}_{q,k}^{\text{miss}}(\wh{\tau}_q,\wh{\boldsymbol{\gamma}}_q)$ & (transformed) missed detection probability of SU $q$ on carrier $k$ [cf. (\ref{eq:P_miss_new})]\\
$\mathbf{x}_{q}\triangleq(\wh{\tau}_{q},\,\mathbf{p}_{q},\, \boldsymbol{\gamma}_q)$ & strategy tuple of SU $q$ (in the transformed variables)\\
$\mathbf{x}_{-q}\triangleq (\wh{\tau}_{r},\,\mathbf{p}_{r},\, \boldsymbol{\gamma}_r)_{r\neq q}$ & strategy profile of all the SUs (in the transformed variables) except the $q$-th one\\
$\mathbf{x}\triangleq (\mathbf{x}_q)_{q=1}^{Q}=(\wh{\boldsymbol{{\tau}}},\,\mathbf{p},\,\boldsymbol{\gamma})$ & strategy profile of all the SUs (in the transformed variables) \\
$\hat{R}_{q}\,(\mathbf{x}_{q},\,\mathbf{x}_{-q})$ & opportunistic throughput of SU $q$ in the transformed variables [cf. (\ref{eq:R_q_new})]\\
$I_q(\mathbf{x}_q)$ & local interference constraint violation of SU $q$ [cf. (\ref{eq:I_q})]\\
$I(\mathbf{x})$ & global interference constraint violation of SU $q$ [cf. (\ref{eq:map_interference})]\\
${\cal{X}}_q$,  $\mathcal{X}\triangleq\prod_{q=1}^{Q}\mathcal{X}_{q}$ & feasible set of SU $q$   [cf. (\ref{eq:set_Xq})],  joint feasible strategy set\\
$\mathcal{X}_{-q}\triangleq\prod_{r\neq q}\mathcal{X}_{r}$ & joint strategy set of the SUs except the $q$-th one\\
${\cal{Y}}_q$, $\mathcal{Y}\triangleq\prod_{q=1}^{Q}\mathcal{Y}_{q}$ & convex part of ${\cal{X}}_q$ [cf. (\ref{eq:def_Y_q})], Cartesian product of all ${\cal{Y}}_q$'s \vspace{0.2cm}\\
\hline\hline  
\end{tabular}\vline\vline
\label{table_notation} 
\end{table} \vspace{-0.2cm}

\section{Solution Analysis\label{sec:NE-LNE-QE}\vspace{-0.2cm}}

This section is devoted to the solution analysis of the game (\ref{eq:player q transformed 1})-(\ref{eq:price equilibrium_vector_form}).
It should be noted that except for the recent work \cite{PScutari10},
no existing theory to date is applicable to analyze this game. We
start our analysis by studying the feasibility of each optimization
problem in (\ref{eq:player q transformed 1}). We then introduce the
definitions of NE, LNE, and the relaxed concept QNE\emph{,} and establish
their main properties. \vspace{-0.1cm}

\subsection{Feasibility conditions \label{sub:Feasibility-conditions}}

The feasibility of the players' problems (\ref{eq:player q transformed 1})
requires that all the polyhedra $\mathcal{Y}_{q}$ defined in (\ref{eq:def_Y_q})
be nonempty; the following are necessary and sufficient conditions
for this to hold. Introducing the SNR detection $\texttt{{snr}}_{q,k}^{\text{{d}}}\triangleq{\sigma_{I_{q,k}}^{2}}/\sigma_{q,k}^{2}$
experimented by the SU $q$ over carrier $k$ and using the definitions
of ${\sigma}_{{q,k}|1}$ and ${\sigma}_{{q,k}|0}$ as given in Sec.
\ref{sub:Detection-problem}, the feasibility requirements are:\vspace{-0.2cm}
\begin{equation}
\sqrt{f_{q}{\tau}_{q}^{\max}}\geq\,{\displaystyle {\frac{\mathcal{Q}^{-1}(\wh{\beta}_{q,k})+|\mathcal{Q}^{-1}(\wh{\alpha}_{q,k})|\,\left({\sigma}_{{q,k}|1}/{\sigma}_{{q,k}|0}\right)}{\texttt{{snr}}_{q,k}^{\mbox{d}}}},\quad\forall k=1,\ldots,N,\,\,\,\forall q=1,\ldots,Q}.\label{eq:nec_suff_feasibility_cond}
\end{equation}
The conditions above have an interesting physical interpretation that
sheds light on the relationship between the achievable sensing performance
and sensing/system parameters. It quantifies the trade-off between
the sensing time (the product ``time-bandwidth'' $f_{q}{\tau}_{q}^{\max}$
of the system) and detection accuracy: the smaller the required false
alarm and missed detection probabilities, the larger the number of
samples and thus $\tau_{q}$ to be taken. \vspace{-0.1cm}

\subsection{NE and local NE\label{sub:NE-and-local}\vspace{-0.2cm}}

The definitions of NE and local NE of a game with price equilibrium
conditions as (\ref{eq:player q transformed 1})-(\ref{eq:price equilibrium_vector_form})
are the natural generalization of the same concepts introduced for
classical noncooperative games having no side constraints (see, e.g.,
\cite{Rosen_econ65,Ferrer_AniaEcLetters01,Schofield_bookChLNE}) and
are given next. \vspace{-0.2cm}

\begin{definition}\label{Def-Nash-equilibrium}A \textbf{\emph{Nash
equilibrium}} of the game (\ref{eq:player q transformed 1}) with
side constraints (\ref{eq:price equilibrium_vector_form}) is a strategy-price
tuple $\left(\mathbf{x}^{\star},{\pi}^{\star}\right)$, such that\vspace{-0.5cm}
\begin{equation}
\mathbf{x}_{q}^{\star}\,\in\,\underset{\mathbf{x}_{q}\,\in\,\mathcal{X}_{q}}{\mbox{\emph{argmax}}}{\displaystyle \,\left\{ \hat{R}_{q}(\mathbf{x}_{q},\mathbf{x}_{-q}^{\star})-{\pi}^{\star}\cdot I(\mathbf{x}_{q},\mathbf{x}_{-q}^{\star})\right\} ,\epc\forall\, q\,=\,1,\cdots,Q,}\vspace{-0.3cm}\label{eq:player q opt}
\end{equation}
 and\vspace{-0.3cm} 
\begin{equation}
0\,\leq\,{\pi}^{\star}\,\perp\,-\, I(\mathbf{x}^{\star})\geq0.\label{eq:side constraint-1}
\end{equation}
 A \textbf{\emph{local Nash equilibrium}} is a tuple $\left(\mathbf{x}^{\star},{\pi}^{\ast}\right)$
for which an open neighborhood $\mathcal{N}_{q}^{\star}$ of $\mathbf{x}_{q}^{\star}$
exists such that\vspace{-0.3cm} 
\begin{equation}
\mathbf{x}_{q}^{\star}\,\in\,\underset{\mathbf{x}_{q}\,\in\,\mathcal{X}_{q}\cap\,\mathcal{N}_{q}^{\star}}{\mbox{\emph{argmax}}}{\displaystyle \,\left\{ \hat{R}_{q}(\mathbf{x}_{q},\mathbf{x}_{-q}^{\star})-{\pi}^{\star}\cdot I(\mathbf{x}_{q},\mathbf{x}_{-q}^{\star})\right\} ,\epc\forall\, q\,=\,1,\cdots,Q,}\vspace{-0.2cm}\label{eq:player q local opt}
\end{equation}
 and (\ref{eq:side constraint-1}) holds. A NE (LNE) is said to be\emph{
trivial }if $\mathbf{p}_{q}^{\star}=\mathbf{0}$ for all $q=1,\ldots,Q$.
\end{definition}

Note that a NE is a LNE, but the converse is not necessarily true.
The feasibility of the optimization problems in (\ref{eq:player q transformed 1})
{[}cf. (\ref{eq:nec_suff_feasibility_cond}){]} is not enough to guarantee
the existence of a (L)NE. The classical case where a NE exists is
when the players\textquoteright{} objective functions (to be maximized)
are (quasi-)concave in their own variables given the other players\textquoteright{}
strategies, and the players\textquoteright{} constraint sets are compact
and convex and independent of their rivals\textquoteright{} strategies
(see, e.g., \cite{Rosen_econ65,Facchinei_Pang_VI-NE_bookCh_09});
other cases are when the game has a special structure, like potential
or supermodular games. The game (\ref{eq:player q transformed 1})
with side constraints (\ref{eq:price equilibrium_vector_form}) has
none of such properties; indeed, each player's optimization problem
in (\ref{eq:player q transformed 1}) is nonconvex with the nonconvexity
occurring in the objective function and the local/global interference
constraints; moreover, the set of feasible prices {[}satisfying (\ref{eq:price equilibrium_vector_form}){]}
is not explicitly bounded {[}note that these prices cannot be normalized
due to the lack of homogeneity in the players' optimization problem
(\ref{eq:player q transformed 1}){]}. Figure \ref{fig:nnconvex_set}
shows an example of feasible set of the detection threshold and power
allocation of a SU, for a given sensing time and $N=1$; the set is
nonconvex, implying the nonconvexity of the SUs' optimization problems
in (\ref{eq:player q transformed 1}). 

\begin{figure}
\vspace{-0.5cm}\center\includegraphics[height=6.8cm]{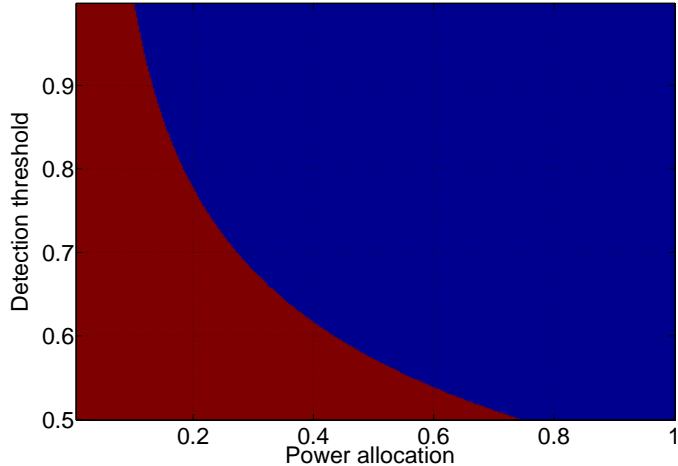}\vspace{-.4cm}

{\footnotesize \caption{{\small Example of nonconvex feasible set of a SU for fixed sensing
time and $N=1$. \label{fig:nnconvex_set}}}
}{\footnotesize \par}

\vspace{-0.2cm}
\end{figure}
 The existence of a (L)NE under the nonconvexity of the players' optimization
problems and the unboundedness of the price variables is in jeopardy;
in fact, the game may not even have a NE, or the NE may exist under
some abstract technical conditions (that are not easy to be checked%
\footnote{Conditions for the existence of a NE are established in \cite[Th. 1]{Baye-Tian-Zhou_RES93}
for NEPs (\emph{with no pricing}), which may have discontinuous and/or
nonquasi-concave payoff functions, but satisfying the so-called ``diagonal
transfer quasi-concavity'' and ``diagonal transfer continuity'';
the case of noncompact strategy sets have been addressed in \cite[Th. 5]{RePEc:ies:wpaper:e200813}
(see also references therein). Despite their theoretical importance,
the verification of these conditions for payoff functions and sets
arising from realistic applications such those in this paper does
not appear possible. %
}) requiring a particular profile of the system parameters (e.g., channel
conditions, interference level, SNR detection, etc...). This would
make the system behavior (e.g., equilibrium existence versus nonexistence,
convergence of algorithms versus nonconvergence) unpredictable, which
is not a desirable feature in practical applications. 
To overcome this issue, in this paper, we propose the use of a (relaxed)
equilibrium concept$-$the QNE\emph{$-$}which will be show to\emph{
alway} exist (even when the NE fails). The QNE along with its main
properties and optimality interpretation is introduced next. \vspace{-0.2cm}

\subsection{Quasi-Nash equilibrium\label{sub:Quasi-Equilibrium}\vspace{-0.2cm}}

For a nonlinear program constrained by finitely many algebraic equations
and inequalities and a differentiable objective function, stationarity
is defined by the well-known Karush-Kuhn-Tucker (KKT) conditions that
are necessarily satisfied by a locally optimal solution under an appropriate
Constraint Qualification (CQ), see \cite[Prop. 1.3.4]{Facchinei-Pang_FVI03};
solutions of the KKT system are called stationary solutions of the
associated optimization problem. In the context of nonconvex problems,
the common approach well-accepted in practice, is then to look for
a stationary (possibly locally optimal) solution. Here we apply this
approach to nonconvex games and introduce the concept of QNE of the
nonconvex game (\ref{eq:player q transformed 1})-(\ref{eq:price equilibrium_vector_form}),
defined as a stationary solution of the game, which is a tuple that
satisfies the KKT conditions of all the players' optimization problems
(\ref{eq:player q transformed 1}) along with the price equilibrium
constraints (\ref{eq:price equilibrium_vector_form}). The prefix
``quasi'' is intended to signify that a NE (if it exists) must be
a QNE under mild CQs.%
\footnote{Note that without some CQs being satisfied, the KKT conditions may
not be even valid \emph{necessary} conditions of optimality for the
optimization problems (\ref{eq:player q transformed 1}), making the
QNE meaningless.%
} A direct study of the main properties of the QNE (e.g., existence
and uniqueness) based on the KKT conditions of the game is not an
easy task; thus to simplify the analysis, we rewrite first the aforementioned
KKT conditions as a proper VI problem \cite{Facchinei-Pang_FVI03};
then, building on VI tools, we provide a satisfactory characterization
of the QNE. It should be noted that unlike a single optimization problem
where a stationary solution must exist (under a CQ) if an optimal
solution exists, this is not the case for a game because a QNE is
the result of concatenating the KKT conditions of several optimization
problems. 

At the basis of our approach there is an equivalent and nontrivial
reformulation of the necessary conditions for a tuple $(\mathbf{x}^{\star},\pi^{\star})$
to be a (L)NE of (\ref{eq:player q transformed 1})-(\ref{eq:price equilibrium_vector_form}),
which explores the structure of the players' feasible sets $\mathcal{X}_{q}$,
as described next. The classical approach to write the KKT conditions
of each player's optimization problem would be introducing multipliers
associated with \emph{all} the constraints in the set $\mathcal{X}_{q}$$-$both
the convex part $\mathcal{Y}_{q}$ and the nonconvex part $I_{q}(\mathbf{x}_{q})\leq0$
{[}cf. (\ref{eq:set_Xq}){]}$-$and then maximizing the resulting
Lagrangian function over\emph{ the whole }space (i.e., considering
an unconstrained optimization problem for the Lagrangian maximization).
The approach we follow here is different: instead of explicitly accounting
all the multipliers as variables of the KKT system, for each player's
optimization problem, we introduce multipliers \emph{only for the
nonconvex} constraints $\{I_{q}(\mathbf{x}_{q})\leq0,\,\, q=1,\ldots,Q\}$
{[}i.e., ($\wh{a}$){]}, and retain the convex part $\mathcal{Y}_{q}$
as explicit constraints in the maximization of the resulting Lagrangian
function. The mathematical formulation of this idea is given next.
Denoting by $\lambda_{q}$ the multiplier associated with the nonconvex
constraint $I_{q}(\mathbf{x}_{q})\leq0$ of player $q$, the Lagrangian
function of player $q$ is:\vspace{-0.2cm} 
\begin{equation}
\mathcal{L}_{q}{\displaystyle \left(\mathbf{x},\,\lambda_{q},\pi\right)}\triangleq\hat{R}_{q}(\mathbf{x}_{q},\mathbf{x}_{-q})-{\pi}\cdot I(\mathbf{x}_{q},\mathbf{x}_{-q})-\lambda_{q}\cdot I_{q}(\mathbf{x}_{q}),\label{eq:Lagrangian}
\end{equation}
which depends also on the strategies $\mathbf{x}_{-q}$ of the other
players and the price $\pi$. Building on Definition \ref{Def-Nash-equilibrium}
it is not difficult to see that if $(\mathbf{x}^{\star},\pi^{\star})$
is a (L)NE of (\ref{eq:player q transformed 1})-(\ref{eq:price equilibrium_vector_form})
and some CQ holds at $\mathbf{x}^{\star}$, there exist multipliers
$\boldsymbol{{\lambda}}^{\star}\triangleq(\lambda_{q}^{\star})_{q=1}^{Q}$
associated with the local nonconvex constraints $\{I_{q}(\mathbf{x}_{q})\leq0,\,\, q=1,\ldots,Q\}$
such that the tuple $\left(\mathbf{x}^{\star},{\pi}^{\star},\boldsymbol{{\lambda}}^{\star}\right)$
satisfies
\begin{equation}
\begin{array}{lc}
\mbox{(i)}:\quad & \mathbf{x}_{q}^{\star}\,\in\,\underset{\mathbf{x}_{q}\,\in\,\mathcal{Y}_{q}}{\mbox{argmax}}{\displaystyle \,\left\{ \mathcal{L}_{q}{\displaystyle \left(\mathbf{x}_{q},\mathbf{x}_{-q}^{\star},\lambda_{q}^{\star},\pi^{\star}\right)}\right\} ,\epc\forall q=1,\cdots,Q,}\bigskip\\
\mbox{(ii)}: & 0\,\leq\,{\pi}^{\star}\,\perp\,-\, I(\mathbf{x}^{\star})\geq0\bigskip\\
\mbox{(iii)}: & 0\,\leq\,{\lambda}_{q}^{\star}\,\perp\,-\, I_{q}(\mathbf{x}_{q}^{\star})\geq0,\epc\forall q=1,\ldots,Q.
\end{array}\label{eq:KKT_1}
\end{equation}
Note that each Lagrangian maximization in (i) is constrained over
the convex part $\mathcal{Y}_{q}$ of the player's local constraints
$\mathcal{X}_{q}$. Since $\mathcal{Y}_{q}$ is a convex set, we can
invoke the variational principle for the optimality of $\mathbf{x}_{q}^{\star}$
in (i), and obtain the following necessary conditions for (\ref{eq:KKT_1})
to hold: 
\begin{equation}
\begin{array}{lc}
\mbox{(}\mbox{i}^{'}\mbox{)}:\quad & \left(\mathbf{x}_{q}-\mathbf{x}_{q}^{\star}\right)^{T}\,\left(-\nabla_{\mathbf{x}_{q}}\mathcal{L}_{q}{\displaystyle \left(\mathbf{x}_{q},\mathbf{x}_{-q}^{\star},\lambda_{q}^{\star},\pi^{\star}\right)}\right)\geq0{\displaystyle \epc\forall\mathbf{x}_{q}\in\mathcal{Y}_{q}\,\,\,\mbox{and}\,\,\,\forall q=1,\cdots,Q,}\bigskip\\
\mbox{(ii\ensuremath{^{'}})}: & (\pi-\pi^{\star})\cdot\left(-\, I(\mathbf{x}^{\star})\right)\geq0,\quad\forall\pi\in\mathbb{R}_{+},\bigskip\\
\mbox{(iii\ensuremath{^{'}})}: & (\lambda_{q}-{\lambda}_{q}^{\star})\cdot\left(-\, I_{q}(\mathbf{x}_{q}^{\star})\right)\geq0,\quad\forall\lambda_{q}\in\mathbb{R}_{+}\,\,\,\mbox{and}\,\,\,\forall q=1,\ldots,Q,
\end{array}\label{eq:KKT_2}
\end{equation}
where $\mbox{(}\mbox{i}^{'}\mbox{)}$ is just the aforementioned first-order
optimality condition of the (nonconvex) optimization problem in $\mbox{(}\mbox{i}\mbox{)}$;
and $\mbox{(}\mbox{ii}^{'}\mbox{)}$-$\mbox{(}\mbox{iii}^{'}\mbox{)}$
are equivalent to $\mbox{(}\mbox{ii}\mbox{)}$-$\mbox{(}\mbox{iii}\mbox{)}$.
Finally, since the variables in $(\mbox{i}\ensuremath{^{'}})$-$(\mbox{iii}\ensuremath{^{'}})$
are not coupled each other by any joint constraint, we can equivalently
rewrite the three separated inequalities $(\mbox{i}\ensuremath{^{'}})$-$(\mbox{iii}\ensuremath{^{'}})$
as one inequality, obtained just summing $(\mbox{i}\ensuremath{^{'}})$-$(\mbox{iii}\ensuremath{^{'}})$.
More specifically, (\ref{eq:KKT_2}) is equivalent to\vspace{-0.2cm}
\begin{equation}
\left(\begin{array}{c}
\mathbf{x}-\mathbf{x}^{\star}\smallskip\\
\pi-\pi^{\star}\smallskip\\
\boldsymbol{{\lambda}}-\boldsymbol{{\lambda}}^{\star}
\end{array}\right)^{T}\underset{\triangleq\boldsymbol{{\Psi}}\left(\mathbf{x}^{\star},{\pi}^{\star},\boldsymbol{{\lambda}}^{\star}\right)}{\underbrace{\left(\begin{array}{c}
-\left(\nabla_{\mathbf{x}_{q}}\mathcal{L}_{q}{\displaystyle \left(\mathbf{x}^{\star},{\pi}^{\star},\,\lambda_{q}^{\star},\right)}\right)_{q=1}^{Q}\smallskip\\
-\, I(\mathbf{x}^{\star})\smallskip\\
-\,\left(I_{q}(\mathbf{x}^{\star})\right)_{q=1}^{Q}
\end{array}\right)}}\geq0,\quad\forall\left(\mathbf{x},{\pi},\boldsymbol{{\lambda}}\right)\in\underset{\triangleq\mathcal{S}}{\underbrace{{\prod_{q=1}^{Q}}\,\mathcal{Y}_{q}\times\mathbb{R}_{+}\times\mathbb{R}_{+}^{Q}}}.\label{eq:VI_ref}
\end{equation}

The above system of inequalities defines the so-called VI problem
in the variables $\left(\mathbf{x},{\pi},\boldsymbol{{\lambda}}\right)$,
whose vector function is $\boldsymbol{{\Psi}}\left(\mathbf{x},{\pi},\boldsymbol{{\lambda}}\right)$
and feasible set is $\mathcal{S}$, both defined in (\ref{eq:VI_ref});%
\footnote{The VI($\mathcal{S},\boldsymbol{\Psi}$) problem is to find a point
$\mathbf{z}^{\star}\in\mathcal{S}$, the solution of the VI, such
that $(\mathbf{z}-\mathbf{z}^{\star})^{T}\boldsymbol{\Psi}(\mathbf{z}^{\star})\geq0$
for all $\mathbf{z}\in\mathcal{S}$ \cite{Facchinei-Pang_FVI03}. %
} such a VI is denoted by VI$(\mathcal{S},\boldsymbol{{\Psi}})$. It
follows from the implications (\ref{eq:KKT_1})$\Rightarrow$(\ref{eq:VI_ref})
that the VI$(\mathcal{S},\boldsymbol{{\Psi}})$ is an equivalent reformulation
of the KKT conditions of the nonconvex game (\ref{eq:player q transformed 1})-(\ref{eq:price equilibrium_vector_form}),
wherein the convex constraints $\mathcal{Y}_{q}$'s (and thus the
associated multipliers) have been absorbed in the VI set $\mathcal{S}$.
The VI$(\mathcal{S},\boldsymbol{{\Psi}})$ is indeed composed of three
sets of variables only: i) the players' decision variables $\mathbf{x}=(\wh{\boldsymbol{{\tau}}},\,\mathbf{p},\,\wh{\boldsymbol{{\gamma}}})$;
ii) the multipliers $\boldsymbol{{\lambda}}\triangleq(\lambda_{q})_{q=1}^{Q}$
associated with the local nonconvex constraints $\{I_{q}(\mathbf{x}_{q})\leq0,\,\, q=1,\ldots,Q\}$;
and iii) the price ${\pi}$; there are no multipliers associated with
the constraints $\mathcal{Y}_{q}$'s. The above discussion is made
formal in the following lemma.\vspace{-0.2cm}

\begin{lemma} \label{Lemma_VI-KKT}The KKT conditions of the game
(\ref{eq:player q transformed 1})-(\ref{eq:price equilibrium_vector_form})
are equivalent to the VI($\mathcal{S},\,\boldsymbol{\Psi}$). The
equivalence is in the following sense: \vspace{-0.2cm}
\begin{description}
\item [{-}] Suppose that $\left(\mathbf{x}^{\star},\pi^{\star},\,\boldsymbol{{\lambda}}^{\star},\,\boldsymbol{\mu}^{\star}\right)$
is a KKT solution of the game, with $\boldsymbol{\lambda}^{\star}\triangleq(\boldsymbol{\lambda}_{q}^{\star})_{q=1}^{Q}$
and $\boldsymbol{\mu}^{\star}\triangleq(\boldsymbol{\mu}_{q}^{\star})_{q=1}^{Q}$
being the multipliers associated with the players' nonconvex constraints
$\{I_{q}(\mathbf{x}_{q}^{\star})\leq0,\,\, q=1,\ldots,Q\}$ and convex
constraints $\mathcal{Y}_{q}$'s, respectively. Then, $\left(\mathbf{x}^{\star},{\pi}^{\star},\,\boldsymbol{{\lambda}}^{\star}\right)$
is a solution of the VI($\mathcal{S},\,\boldsymbol{\Psi}$).\vspace{-0.2cm}
\item [{-}] Conversely, suppose that $\left(\overline{{\mathbf{x}}},\,\overline{{\pi}},\,\overline{\boldsymbol{{\lambda}}}\right)$
is a solution of the VI($\mathcal{S},\,\boldsymbol{\Psi}$). Then,
there exists $\overline{\boldsymbol{{\mu}}}\triangleq(\overline{\boldsymbol{{\mu}}}_{q})_{q=1}^{Q}$
such that $\left(\overline{{\mathbf{x}}},\,\overline{{\pi}},\,\overline{\boldsymbol{{\lambda}}},\,\overline{\boldsymbol{{\mu}}}\right)$
is a KKT solution of the game, with $\overline{\boldsymbol{{\mu}}}\triangleq(\overline{\boldsymbol{{\mu}}}_{q})_{q=1}^{Q}$
being the multipliers associated with the players' convex constraints
$\mathcal{Y}_{q}$'s.
\end{description}
\end{lemma}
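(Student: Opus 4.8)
The plan is to establish the biconditional one player and one variable block at a time, leveraging two structural facts already built into the transformed game: the convex part $\mathcal{Y}_q$ of each feasible set (\ref{eq:def_Y_q}) is a \emph{polyhedron}, since constraints $(\wh{b})$ and $(\wh{c})$ are linear by construction, and the VI set $\mathcal{S}=\prod_{q=1}^{Q}\mathcal{Y}_q\times\mathbb{R}_+\times\mathbb{R}_+^Q$ is a Cartesian product with no coupling across blocks. The engine of the argument is the standard equivalence between a variational inequality over a polyhedron and its multiplier (KKT) form \cite[Prop.~1.3.4]{Facchinei-Pang_FVI03}: writing $\mathcal{Y}_q=\{\mathbf{x}_q:\mathbf{C}_q\mathbf{x}_q\leq\mathbf{d}_q\}$, the condition $(\mathbf{x}_q-\mathbf{x}_q^\star)^T(-\nabla_{\mathbf{x}_q}\mathcal{L}_q(\mathbf{x}_q^\star))\geq0$ for all $\mathbf{x}_q\in\mathcal{Y}_q$ holds \emph{if and only if} there is a $\boldsymbol{\mu}_q\geq\mathbf{0}$ with $-\nabla_{\mathbf{x}_q}\mathcal{L}_q(\mathbf{x}_q^\star)+\mathbf{C}_q^T\boldsymbol{\mu}_q=\mathbf{0}$ and $0\leq\boldsymbol{\mu}_q\perp\mathbf{d}_q-\mathbf{C}_q\mathbf{x}_q^\star\geq\mathbf{0}$. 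Polyhedrality is precisely what upgrades this to an equivalence, since affine constraints automatically satisfy a CQ and no further assumption on the game is needed.

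For the first assertion (KKT $\Rightarrow$ VI), I would start from a KKT tuple $(\mathbf{x}^\star,\pi^\star,\boldsymbol{\lambda}^\star,\boldsymbol{\mu}^\star)$ of the game, that is, the concatenation of each player's nonlinear-programming KKT system, with $\lambda_q^\star$ attached to $I_q\leq0$ and $\boldsymbol{\mu}_q^\star$ to the constraints of $\mathcal{Y}_q$, together with the price complementarity (\ref{eq:price equilibrium_vector_form}). Recognizing the player-$q$ stationarity equation as $-\nabla_{\mathbf{x}_q}\mathcal{L}_q(\mathbf{x}^\star,\lambda_q^\star,\pi^\star)+\mathbf{C}_q^T\boldsymbol{\mu}_q^\star=\mathbf{0}$ for the Lagrangian (\ref{eq:Lagrangian}), the ``if'' direction of the equivalence above collapses this stationarity together with the $\boldsymbol{\mu}_q^\star$-complementarity into the per-player variational inequality $(\mbox{i}')$ of (\ref{eq:KKT_2}); the $\pi$- and $\lambda$-complementarities are verbatim $(\mbox{ii}')$ and $(\mbox{iii}')$. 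Since $\mathcal{S}$ is a product, summing $(\mbox{i}')$--$(\mbox{iii}')$ produces (\ref{eq:VI_ref}), so $(\mathbf{x}^\star,\pi^\star,\boldsymbol{\lambda}^\star)$ solves VI$(\mathcal{S},\boldsymbol{\Psi})$; the multipliers $\boldsymbol{\mu}^\star$ are then simply discarded, having been absorbed into the set $\mathcal{Y}_q$.

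For the converse, I would reverse the chain. Given a solution $(\overline{\mathbf{x}},\overline{\pi},\overline{\boldsymbol{\lambda}})$ of VI$(\mathcal{S},\boldsymbol{\Psi})$, I would test the aggregated inequality (\ref{eq:VI_ref}) with feasible points that perturb a single block at a time (admissible precisely because $\mathcal{S}$ is a Cartesian product), thereby decoupling it into $(\mbox{i}')$, $(\mbox{ii}')$, $(\mbox{iii}')$. Conditions $(\mbox{ii}')$ and $(\mbox{iii}')$ restate the price and local-interference complementarities directly. For each $q$, condition $(\mbox{i}')$ is the variational inequality of $\mathcal{L}_q$ over the polyhedron $\mathcal{Y}_q$; invoking the ``only if'' direction of the equivalence produces a multiplier $\overline{\boldsymbol{\mu}}_q\geq\mathbf{0}$ that restores the stationarity equation and the $\mathcal{Y}_q$-complementarity. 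Collecting $(\overline{\boldsymbol{\mu}}_q)_{q=1}^{Q}$ with $\overline{\boldsymbol{\lambda}}$ and $\overline{\pi}$ assembles the full KKT system of the game, which yields the claimed $\overline{\boldsymbol{\mu}}$.

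The only substantive point, and the step I would treat most carefully, is the VI-to-multiplier passage in the converse: recovering $\overline{\boldsymbol{\mu}}$ without imposing any constraint qualification on the game as a whole. This is exactly where the polyhedrality of $\mathcal{Y}_q$ is indispensable, as it makes the VI--KKT statement a genuine biconditional rather than a one-directional necessary condition. Everything else, namely the splitting of the aggregated VI into per-player and per-constraint pieces and the matching of the individual complementarities, is bookkeeping enabled by the product form of $\mathcal{S}$.
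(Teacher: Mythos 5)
Your proposal is correct and follows essentially the same route as the paper: dualize only the nonconvex constraints $I_q(\mathbf{x}_q)\leq 0$ into the Lagrangian (\ref{eq:Lagrangian}), keep the polyhedral sets $\mathcal{Y}_q$ explicit, pass between the per-player stationarity-plus-complementarity system and the per-player variational condition $(\mbox{i}')$, and use the Cartesian-product structure of $\mathcal{S}$ to sum into, or decouple from, the aggregated inequality (\ref{eq:VI_ref}). Your explicit appeal to the polyhedral VI--KKT equivalence (affine constraints needing no constraint qualification) to recover $\overline{\boldsymbol{\mu}}$ in the converse direction is exactly the point the paper leaves implicit when it says the convex constraints and their multipliers are ``absorbed'' into $\mathcal{S}$, so your write-up is, if anything, slightly more complete than the paper's discussion.
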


Given the connection between a VI problem and the KKT system of the
game, the QNE of the game can also be interpreted as solutions of
the VI($\mathcal{S},\,\boldsymbol{\Psi}$), which motivates the following
formal definition of QNE of the nonconvex game (\ref{eq:player q transformed 1})-(\ref{eq:price equilibrium_vector_form}).\begin{definition}The
\textbf{\emph{quasi-Nash equilibria}}\emph{ (QNE)} of the game (\ref{eq:player q transformed 1})
with side constraints (\ref{eq:price equilibrium_vector_form}) are
the solutions $\left(\mathbf{x}^{\star},{\pi}^{\star},\,\boldsymbol{{\lambda}}^{\star}\right)$
of the VI$(\mathcal{S},\boldsymbol{\boldsymbol{\Psi}})$. A QNE is
said to be\emph{ trivial }if $\mathbf{p}_{q}^{\star}=\mathbf{0}$
for all $q=1,\ldots,Q$.\end{definition}

Our concept of QNE is conceptually similar but formally different
from other forms of local equilibria introduced in the literature,
such as the critical NE \cite{Baye-Tian-Zhou_RES93} or the generalized
equilibrium \cite{CornetaCzarnecki01}. The former is indeed a feasible
strategy profile of the players at which the gradients of each player's
objective function (taken with respect to that player's strategy)
are vanishing; the vanishing property of the gradient is typically
not satisfied by solutions of \emph{constrained} optimization problems,
let alone equilibria of inequality constraint games. The latter equilibrium
concept \cite{CornetaCzarnecki01} is defined as the solution of an
abstract set value inclusion, whose fruitful application to realistic
games like those proposed in this paper is seriously questionable.
In contrast, our VI-based definition of local equilibrium opens the
way to the use of the VI machinery \cite{Facchinei-Pang_FVI03}, both
theory and methods, to successfully study the QNE of \emph{realistic}
games. The line of analysis we are going to illustrate, based on \cite{PScutari10},
is in fact a new contribution in the literature of solution analysis
of local equilibria. \medskip{}

\noindent \textbf{Optimality interpretation of the QNE}. Since the
QNE is a stationary solution of the game, it is desirable to understand
whether the QNE has some (local) optimality properties. Interestingly,
the QNE has the following equivalent interpretation: the triplet $\left(\mathbf{x}^{\star},\pi^{\star},\,\boldsymbol{{\lambda}}^{\star}\right)$
is a QNE of the game (\ref{eq:player q transformed 1})-(\ref{eq:price equilibrium_vector_form})
if and only if it is an optimal solution of the players' optimization
problems in the following sense {[}recall that $\mathbf{x}^{\star}=(\mathbf{x}_{q}^{\star})_{q=1}^{Q}$
with $\mathbf{x}_{q}^{\star}=\left(\wh{\tau}_{q}^{\star},\,\mathbf{p}_{q}^{\star},\,\wh{\boldsymbol{{\gamma}}}_{q}^{\star}\right)$,
and $\mathbf{x}_{-q}^{\star}=\left(\wh{\tau}_{r}^{\star},\,\mathbf{p}_{r}^{\star},\,\wh{\boldsymbol{{\gamma}}}_{r}^{\star}\right)_{r\neq q}${]}:
\begin{description}
\item [{(a)}] (\emph{NE power allocation for fixed optimal thresholds and
sensing times}) The tuple $\mathbf{p}^{\star}$ is a NE of the SUs'
game when $\wh{\boldsymbol{{\tau}}}=\wh{\boldsymbol{{\tau}}}^{\star}$
and $\wh{\boldsymbol{{\gamma}}}=\wh{\boldsymbol{{\gamma}}}^{\star}$;
for each $q=1,\cdots,Q$,\vspace{-0.2cm} 
\begin{equation}
\begin{array}{lll}
\mathbf{p}_{q}^{\star}\,\in & {\displaystyle {\operatornamewithlimits{\mbox{argmax}}_{\mathbf{p}_{q}}}} & \left\{ {\displaystyle \hat{R}_{q}(\wh{\tau}_{q}^{\star},\,\mathbf{p}_{q},\,\wh{\boldsymbol{{\gamma}}}_{q}^{\star},\mathbf{x}_{-q}^{\star})-{\pi}^{\star}\cdot I(\wh{\tau}_{q}^{\star},\,\mathbf{p}_{q},\,\wh{\boldsymbol{{\gamma}}}_{q}^{\star},\mathbf{x}_{-q}^{\star})}\right\} \vspace{-0.2cm}\\[0.25in]
 & \mbox{\mbox{subject to}} & \left(\wh{\tau}_{q}^{\star},\,\mathbf{p}_{q},\,\wh{\boldsymbol{{\gamma}}}_{q}^{\star}\right)\in\mathcal{X}_{q},\\[0.3in]
\end{array}\vspace{-0.4cm}\label{eq:given thresholds}
\end{equation}
\vspace{-1.2cm} 
\item [{(b)}] (\emph{Threshold optimization for fixed optimal power allocations
and sensing times}) The threshold vector $\wh{\boldsymbol{{\gamma}}}_{q}^{\star}$
is an optimal solution of the $q$-th players' optimization problem
when $\wh{\boldsymbol{{\tau}}}=\wh{\boldsymbol{{\tau}}}^{\star}$
and $\mathbf{p}=\mathbf{p}^{\star}$: for each $q=1,\cdots,Q$,\vspace{-0.2cm}
\begin{equation}
\begin{array}{lll}
\wh{\boldsymbol{{\gamma}}}_{q}^{\star}\,\in & {\displaystyle {\operatornamewithlimits{\mbox{argmax}}_{\wh{\boldsymbol{{\gamma}}}_{q}}}} & \left\{ {\displaystyle \hat{R}_{q}(\wh{\tau}_{q}^{\star},\,\mathbf{p}_{q}^{\star},\,\wh{\boldsymbol{{\gamma}}}_{q},\mathbf{x}_{-q}^{\star})-{\pi}^{\star}\cdot I(\wh{\tau}_{q}^{\star},\,\mathbf{p}_{q}^{\star},\,\wh{\boldsymbol{{\gamma}}}_{q},\mathbf{x}_{-q}^{\star})}\right\} \vspace{-0.2cm}\\[0.25in]
 & \mbox{\mbox{subject to}} & \left(\wh{\tau}_{q}^{\star},\,\mathbf{p}_{q}^{\star},\,\wh{\boldsymbol{{\gamma}}}_{q}\right)\in\mathcal{X}_{q};\\[0.3in]
\end{array}\label{eq:given rates}
\end{equation}
\vspace{-1.2cm} 
\item [{(c)}] (\emph{Sensing time optimization for fixed optimal power
allocations and thresholds}) The sensing time $\wh{\tau}_{q}^{\star}$
is an optimal solution of the $q$-th players' optimization problem
when $\wh{\boldsymbol{{\gamma}}}_{q}=\wh{\boldsymbol{{\gamma}}}_{q}^{\star}$
and $\mathbf{p}=\mathbf{p}^{\star}$: for each $q=1,\cdots,Q$,\vspace{-0.4cm}
\begin{equation}
\begin{array}{lll}
\wh{\tau}_{q}^{\star}\,\in & {\displaystyle {\operatornamewithlimits{\mbox{argmax}}_{\wh{\tau}_{q}}}} & \left\{ {\displaystyle \hat{R}_{q}(\wh{\tau}_{q},\,\mathbf{p}_{q}^{\star},\,\wh{\boldsymbol{{\gamma}}}_{q}^{\star},\mathbf{x}_{-q}^{\star})-{\pi}^{\star}\cdot I(\wh{\tau}_{q},\,\mathbf{p}_{q}^{\star},\,\wh{\boldsymbol{{\gamma}}}_{q}^{\star},\mathbf{x}_{-q}^{\star})}\right\} \vspace{-0.2cm}\\[0.25in]
 & \mbox{\mbox{subject to}} & \left(\wh{\tau}_{q},\,\mathbf{p}_{q}^{\star},\,\wh{\boldsymbol{{\gamma}}}_{q}^{\star}\right)\in\mathcal{X}_{q};\\[0.3in]
\end{array}\vspace{-0.4cm}\label{eq:given rates_sensing_time_opt}
\end{equation}
\vspace{-1.1cm} 
\item [{(d)}] (\emph{Price equilibrium}) The complementarity condition
$0\,\leq\,{\pi}^{\star}\,\perp\,-I(\mathbf{x}^{\star})\geq0$ holds;
\vspace{-0.3cm} 
\item [{(e)}] (\emph{Common multipliers for individual interference} constraints)
For each $q=1,\cdots,Q$, there exists a \emph{common} optimal multiplier
tuple $\boldsymbol{{\lambda}}^{\star}$ associated with the individual
interference constraints $\{I_{q}(\mathbf{x}_{q}^{\star})\leq0,\,\, q=1,\ldots,Q\}$
at $\mathbf{x}^{\star}$. 
\end{description}
In words, at a QNE $\left(\mathbf{x}^{\star},{\pi}^{\star},\,\boldsymbol{{\lambda}}^{\star}\right)$
we have that: i) each user unilaterally maximizes his own function
with respect to each of his own strategies $\wh{\tau}_{q}^{\star}$,
$\mathbf{p}_{q}^{\star}$, and $\wh{\boldsymbol{{\gamma}}}_{q}^{\star}$
\emph{separately}, while keeping the rivals' strategies fixed at the
optimal value {[}statements (a)-(c) above{]}; ii) the optimal price
value ${\pi}^{\star}$ satisfies the complementarity condition (\ref{eq:price equilibrium_vector_form})
{[}statement (d) above{]}; and iii) $\boldsymbol{{\lambda}}^{\star}$
is the \emph{common} optimal multiplier for all the players associated
with the individual interference constraints in (\ref{eq:given thresholds})-(\ref{eq:given rates_sensing_time_opt})
{[}statement (e) above{]}. Note that (\ref{eq:given thresholds})
is a linearly constrained concave maximization problem; thus multipliers
exist for this problem. Problems (\ref{eq:given rates}) and (\ref{eq:given rates_sensing_time_opt})
are concave maximization program with convex constraints fulfilling
the Slater CQ for a fixed but arbitrary nonzero $\mathbf{p}_{q}^{\star}$.
Thus they also have constraint multipliers. Moreover, the constraint
sets of problems (\ref{eq:given thresholds}), (\ref{eq:given rates}),
and (\ref{eq:given rates_sensing_time_opt}) are bounded. A key requirement
in the QNE definition is therefore condition (e) that stipulates the
existence of a set of \emph{common} multiplier tuple $\boldsymbol{{\lambda}}^{\star}$
for the individual interference constraints, which is not automatically
guaranteed. Sec. \ref{sec:Existence-of-QE} focuses on this issue.\vspace{-0.2cm}

\subsection{Connection between LNE and QNE\label{sub:Connection-between-LNE_QE}\vspace{-0.1cm}}

First of all, note that a (L)NE $(\mathbf{x}^{\star},\pi^{\star})$
is composed of the sensing/transmission strategies of the players
as well as the price, whereas a QNE is a tuple $\left(\mathbf{x}^{\star},\,{\pi}^{\star},\,\boldsymbol{\lambda}^{\star}\right)$
consisting also of the KKT multipliers $\boldsymbol{{\lambda}}^{\star}$
of the game's nonconvex constraints $\{I_{q}(\mathbf{x}_{q})\leq0,\,\, q=1,\ldots,Q\}$.
To explore the connection between a LNE and a QNE we need to show
that, under the feasibility and solvability of all the players' optimization
problems in (\ref{eq:player q transformed 1}) (cf. Sec. \ref{sub:Feasibility-conditions}),
the KKT conditions of the game are valid \emph{necessary} conditions
of optimality for these problems. To this end, we need to verify that
an appropriate CQ holds. In this paper, we will use the Abadie's CQ
(ACQ) \cite[Sec. 3.2]{Facchinei-Pang_FVI03}. Not explicitly mentioned,
this CQ is essentially the key to show the validity of the following
result, which states that every LNE must be a QNE in this game.

\begin{proposition} \label{pr:LNE implies QE} Given the game (\ref{eq:player q transformed 1})
with (possibly) side constraints (\ref{eq:price equilibrium_vector_form}),
the following holds: if $(\mathbf{x}^{\star},{\pi}^{\star})$ is a
LNE of the game then $\boldsymbol{{\lambda}}^{\star}$ exists such
that $(\mathbf{x}^{\star},\pi^{\star},\,\boldsymbol{{\lambda}}^{\star})$
is a QNE. \end{proposition}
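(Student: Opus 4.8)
The plan is to show that, at a LNE $(\mathbf{x}^\star,\pi^\star)$, the Karush--Kuhn--Tucker system (\ref{eq:KKT_1}) of all the players' problems holds, so that the associated multipliers $\boldsymbol{\lambda}^\star$ assemble into a solution of VI$(\mathcal{S},\boldsymbol{\Psi})$, hence into a QNE. Since the price complementarity (\ref{eq:side constraint-1}) is already part of the LNE definition [this is condition (ii) in (\ref{eq:KKT_1})], and since the chain (\ref{eq:KKT_1})$\Rightarrow$(\ref{eq:KKT_2})$\Rightarrow$(\ref{eq:VI_ref}) together with Lemma \ref{Lemma_VI-KKT} is already established, the only nontrivial point is to produce, for each player $q$, a multiplier $\lambda_q^\star\geq 0$ for the nonconvex constraint $I_q(\mathbf{x}_q)\leq 0$ together with the first-order stationarity over the polyhedron $\mathcal{Y}_q$. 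This is exactly the assertion that the KKT conditions are \emph{necessary} for the local optimality of $\mathbf{x}_q^\star$ in (\ref{eq:player q local opt}), which holds as soon as a constraint qualification is satisfied at $\mathbf{x}_q^\star$ for the set $\mathcal{X}_q=\{\mathbf{x}_q\in\mathcal{Y}_q:I_q(\mathbf{x}_q)\leq 0\}$. I therefore reduce the whole proposition to verifying such a CQ.

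First I would verify Abadie's CQ at every feasible $\mathbf{x}_q^\star$ by exhibiting a Mangasarian--Fromovitz direction (MFCQ implies ACQ). Recall that $\mathcal{Y}_q$ in (\ref{eq:def_Y_q}) is polyhedral, so its linear constraints never obstruct the CQ; only the single smooth nonconvex constraint $I_q$ in (\ref{eq:I_q}) can. If $I_q(\mathbf{x}_q^\star)<0$ it is inactive and $\mathcal{X}_q$ coincides locally with the polyhedron $\mathcal{Y}_q$, for which ACQ is automatic (this covers, in particular, the trivial equilibria). If $I_q(\mathbf{x}_q^\star)=0$, then $\sum_k \wh{P}_{q,k}^{\text{miss}}\,w_{q,k}\,p_{q,k}^\star=I_q^{\max}>0$, which forces $\mathbf{p}_q^\star\neq\mathbf{0}$. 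The crucial structural fact is that $\wh{P}_{q,k}^{\text{miss}}>0$ and $w_{q,k}>0$, so $\nabla_{\mathbf{p}_q}I_q(\mathbf{x}_q^\star)=(\wh{P}_{q,k}^{\text{miss}}\,w_{q,k})_{k=1}^{N}>\mathbf{0}$ componentwise, while $\mathbf{p}_q=\mathbf{0}$ is always feasible for $\mathcal{P}_q$. I would then take the direction $\mathbf{d}=(0,\,-\mathbf{p}_q^\star,\,\mathbf{0})$, which leaves $\wh{\tau}_q$ and $\wh{\boldsymbol{\gamma}}_q$ unchanged and reduces the power toward zero: it lies in the tangent cone of $\mathcal{Y}_q$ at $\mathbf{x}_q^\star$ (reducing power preserves $\sum_k p_{q,k}\leq P_q$ and $\mathbf{0}\leq\mathbf{p}_q\leq\mathbf{p}_q^{\max}$, and does not touch the constraints on $\wh{\tau}_q,\wh{\boldsymbol{\gamma}}_q$), and it satisfies $\nabla I_q(\mathbf{x}_q^\star)^T\mathbf{d}=-\sum_k \wh{P}_{q,k}^{\text{miss}}\,w_{q,k}\,p_{q,k}^\star=-I_q^{\max}<0$. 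Thus MFCQ holds for the active constraint set, giving ACQ.

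With ACQ in force, the local optimality (\ref{eq:player q local opt}) of $\mathbf{x}_q^\star$ yields, for each $q$, a multiplier $\lambda_q^\star\geq 0$ satisfying the first-order stationarity over $\mathcal{Y}_q$ and the complementarity $0\leq\lambda_q^\star\perp -I_q(\mathbf{x}_q^\star)\geq 0$ --- that is, precisely (i)--(iii) of (\ref{eq:KKT_1}), with (ii) inherited from the LNE. Concatenating these over $q$, rewriting the stationarity as the variational inequalities (\ref{eq:KKT_2}) and summing them into (\ref{eq:VI_ref}) shows that $(\mathbf{x}^\star,\pi^\star,\boldsymbol{\lambda}^\star)$ solves VI$(\mathcal{S},\boldsymbol{\Psi})$; by Lemma \ref{Lemma_VI-KKT} and the definition of QNE, this is the desired QNE. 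The hard part will be exactly the active case $I_q(\mathbf{x}_q^\star)=0$: a priori one cannot rule out that $\nabla I_q$ degenerates relative to the active faces of $\mathcal{Y}_q$, which is what would destroy the CQ and leave the KKT system non-necessary. The monotone structure of $I_q$ in $\mathbf{p}_q$, combined with the permanent feasibility of zero power, is what resolves this cleanly and uniformly over all equilibria.
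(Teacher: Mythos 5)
Your proof is correct and follows essentially the same route the paper intends: reduce the proposition to a constraint qualification at each $\mathbf{x}_q^\star$, then pass from the resulting KKT stationarity through (\ref{eq:KKT_2}) to the VI (\ref{eq:VI_ref}), i.e., to a QNE; indeed, you supply the one detail the paper leaves implicit (``not explicitly mentioned, this CQ is essentially the key''), namely an explicit Mangasarian--Fromovitz direction $(0,\,-\mathbf{p}_q^\star,\,\mathbf{0})$ that weakly preserves the polyhedron $\mathcal{Y}_q$ and strictly decreases the active constraint $I_q$, which implies Abadie's CQ. One cosmetic remark: local optimality yields the first-order conditions (i$'$)--(iii$'$) of (\ref{eq:KKT_2}), not the global Lagrangian maximization in (i) of (\ref{eq:KKT_1}) as you label it, but this is immaterial since the QNE is defined directly by (\ref{eq:VI_ref}).
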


Note that the converse of Proposition \ref{pr:LNE implies QE} in
general is not true; for a QNE to be a LNE, one needs appropriate
second-order sufficiency conditions. The analysis is quite involved
and goes beyond the scope of this paper; we refer the interested reader
to \cite[Proposition 7]{PScutari10} for details. In the companion
paper \cite{Pang-Scutari-NNConvex_PII}, we derive sufficient conditions
for (a special case of) the game to have a unique QNE, which then
must coincide with a (L)NE. Finally, recall that, 
since the set $\mathcal{S}$ of the VI$(\mathcal{S},\,\boldsymbol{{\Psi}})$
is unbounded, the existence of a QNE (a solution of the VI) is not
automatically guaranteed. The study of existence and boundedness of
the QNE is addressed in the next section. \vspace{-0.1cm}

\subsection{Existence of a QNE\label{sec:Existence-of-QE}\vspace{-0.2cm}}

We can now study the existence of a QNE of the game (\ref{eq:player q transformed 1})-(\ref{eq:price equilibrium_vector_form}),
via the solution analysis of the VI$(\mathcal{S},\boldsymbol{{\Psi}})$.
The theorem below states that the game (\ref{eq:player q transformed_2}),
if feasible, always admits a non trivial QNE. 
\vspace{-0.1cm}

\begin{theorem} \label{th:QE} Suppose that the set $\mathcal{S}$
in (\ref{eq:VI_ref}) is nonempty \emph{{[}}i.e., the feasibility
conditions (\ref{eq:nec_suff_feasibility_cond}) hold true\emph{{]}}.
Then, the VI$(\mathcal{S},\boldsymbol{{\Psi}})$ has a nonempty bounded
solution set; thus the game (\ref{eq:player q transformed 1}) with
side constraints (\ref{eq:price equilibrium_vector_form}) has bounded
QNEs. Moreover, every QNE (and thus LNE) is non trivial.\vspace{-0.1cm}
\end{theorem}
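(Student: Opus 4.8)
The plan is to establish the three assertions — nonemptiness, boundedness, and non-triviality of the solution set of VI$(\mathcal{S},\boldsymbol{\Psi})$ — by exploiting the block structure exposed in (\ref{eq:VI_ref}), following the line of \cite{PScutari10}. The starting observation is that the only obstruction to a routine compactness argument is the unboundedness of $\mathcal{S}$ in the price $\pi$ and the multiplier block $\boldsymbol{\lambda}$: the polyhedron $\mathcal{Y}=\prod_q\mathcal{Y}_q$ is in fact compact (each $\mathbf{p}_q$ lies in the bounded set $\mathcal{P}_q$, each $\wh{\tau}_q$ in a box, and each $\wh{\gamma}_{q,k}$ is bounded below by $\wh{\beta}_{q,k}$ and above through the missed-detection constraint in (\ref{eq:def_Y_q}) combined with the upper bound on $\wh{\tau}_q$), while $\boldsymbol{\Psi}$ is continuous on $\mathcal{S}$ (all denominators in $r_{q,k}$ stay positive on $\mathcal{Y}$, and $\pi,\boldsymbol{\lambda}$ enter linearly). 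I would therefore truncate, replacing $\mathbb{R}_+$ and $\mathbb{R}_+^Q$ by $[0,M]$ and $[0,M]^Q$ to obtain a compact convex set $\mathcal{S}_M$; by the standard solvability theorem for a continuous VI over a compact convex set, VI$(\mathcal{S}_M,\boldsymbol{\Psi})$ admits a solution $(\mathbf{x}_M,\pi_M,\boldsymbol{\lambda}_M)$. Everything then reduces to producing bounds on $\pi_M$ and $\boldsymbol{\lambda}_M$ independent of $M$: with such bounds, choosing $M$ large forces the price/multiplier coordinates into the interior, so the truncated variational inequalities in those directions become the two-sided complementarity conditions (ii)--(iii) of (\ref{eq:KKT_2}), and $(\mathbf{x}_M,\pi_M,\boldsymbol{\lambda}_M)$ solves the full VI$(\mathcal{S},\boldsymbol{\Psi})$; the same bounds deliver boundedness of the entire solution set.

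I would establish non-triviality first, since it feeds the price bound. Suppose a solution had $\mathbf{p}_q=\mathbf{0}$ for every $q$. Then $I_q(\mathbf{x}_q)=-I_q^{\max}<0$ and $I(\mathbf{x})=-I^{\max}<0$, so the complementarity conditions in (\ref{eq:KKT_2}) force $\pi=0$ and every $\lambda_q=0$. On the other hand, at $\mathbf{p}=\mathbf{0}$ the marginal throughput $\partial \hat{R}_q/\partial p_{q,k}=(1-\wh{\tau}_q^2/(f_qT_q))(1-\wh{P}_{q,k}^{\text{{fa}}})\,|H_{qq}(k)|^2/\sigma_{q,k}^2$ (up to the logarithm base) is strictly positive, whereas the penalty gradient $(\pi+\lambda_q)\,\wh{P}_{q,k}^{\text{{miss}}}\,w_{q,k}$ vanishes; hence the $p_{q,k}$-component of $-\nabla_{\mathbf{x}_q}\mathcal{L}_q$ is strictly negative, and moving a feasible positive increment into this coordinate (admissible since all other powers vanish) violates the $\mathbf{x}_q$-block inequality in (\ref{eq:VI_ref}). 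This contradiction rules out the trivial solution; the identical argument applies verbatim to the truncated problems (there the truncated complementarity still forces $\pi_M=\lambda_{q,M}=0$ at an all-zero-power point). Since by Proposition \ref{pr:LNE implies QE} every LNE is a QNE, non-triviality transfers to the LNE.

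The heart of the proof — and the step I expect to be the main obstacle — is the uniform a-priori bound. Fix any solution and a coordinate $(q,k)$ with $p_{q,k}>0$ (one exists by non-triviality). Because $p_{q,k}$ can be feasibly decreased (lowering a single component never violates the coupled budget in $\mathcal{P}_q$), the $\mathbf{x}_q$-block of (\ref{eq:VI_ref}) in that coordinate yields $\partial\mathcal{L}_q/\partial p_{q,k}\ge 0$, that is $(\pi+\lambda_q)\,\wh{P}_{q,k}^{\text{{miss}}}\,w_{q,k}\le \partial\hat{R}_q/\partial p_{q,k}$. On the compact set $\mathcal{Y}$ the right-hand side is bounded above (its supremum is at most $|H_{qq}(k)|^2/\sigma_{q,k}^2$, up to the logarithm base) while $\wh{P}_{q,k}^{\text{{miss}}}\,w_{q,k}$ is bounded away from zero: the argument of the $\mathcal{Q}$-function in (\ref{eq:P_miss_new}) ranges over a bounded interval on $\mathcal{Y}$, so $\wh{P}_{q,k}^{\text{{miss}}}$ has a positive infimum, and $w_{q,k}>0$. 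Hence $\pi+\lambda_q\le C_{q,k}$ for a constant depending only on the data, which bounds $\pi$ since $\lambda_q\ge 0$. For the remaining multipliers I would split cases: any player $r$ with $\mathbf{p}_r=\mathbf{0}$ has $I_r(\mathbf{x}_r)=-I_r^{\max}<0$, hence $\lambda_r=0$ by complementarity; any player $r$ with some $p_{r,\ell}>0$ satisfies $\lambda_r\le C_{r,\ell}$ by the same inequality. Thus $\pi$ and all $\lambda_q$ are bounded by constants independent of $M$ (and of the particular solution), which simultaneously closes the truncation argument, yields the bounded solution set, and — together with the previous paragraph — gives non-triviality. The delicate points to verify carefully are precisely the two uniform estimates over $\mathcal{Y}$ (the upper bound on the marginal rate and the strictly positive lower bound on $\wh{P}_{q,k}^{\text{{miss}}}$) and the admissibility of the one-sided perturbations; the latter is benign because each power component can always be decreased within $\mathcal{P}_q$.
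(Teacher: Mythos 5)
Your proof is correct, but it follows a genuinely different route from the paper's. The paper invokes the coercivity-type existence result \cite[Proposition 2.2.3]{Facchinei-Pang_FVI03} directly on the unbounded set $\mathcal{S}$: it picks the reference point $\wh{\gamma}_{q,k}^{\,\rm ref}=\wh{\beta}_{q,k}$, $\wh{\tau}_q^{\,\rm ref}=\wh{\tau}_q^{\max}$, all other components zero, expands the inner product defining the level set $\mathcal{L}_{\leq}$, discards the nonpositive terms (the choice of reference is engineered exactly so that the term multiplying $(\lambda_q+\pi)$ has a favorable sign), and reads off a data-dependent upper bound on $\sum_q\lambda_q I_q^{\max}+\pi I^{\max}$, which bounds $\mathcal{L}_{\leq}$ and hence (since every solution lies in $\mathcal{L}_{\leq}$) the solution set; non-triviality is a closing remark. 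You instead truncate the price/multiplier orthant to a compact box, get existence there from the elementary compact-convex VI theorem, and then remove the truncation by an a-priori bound on $\pi+\lambda_q$ obtained from stationarity at a positive-power coordinate, namely $(\pi+\lambda_q)\,\wh{P}_{q,k}^{\text{miss}}\,w_{q,k}\leq \partial\hat{R}_q/\partial p_{q,k}$, combined with the clean observation that constraint ($\wh{b}$) forces $\wh{P}_{q,k}^{\text{miss}}\geq 1-\alpha_{q,k}\geq 1/2$ on $\mathcal{Y}$; this requires proving non-triviality \emph{first} (your complementarity-plus-positive-marginal-rate argument is sound, and correctly transfers to the truncated problems). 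The paper's approach is a one-shot application of a known theorem and its multiplier bound needs no non-triviality hypothesis, since the coefficients $I_q^{\max},I^{\max}>0$ do the work; your approach is more self-contained, bounds the multipliers at solutions rather than over a whole level set, and yields an interpretable explicit constant (equilibrium prices cannot exceed the maximal marginal throughput divided by the minimal marginal interference $w_{q,k}(1-\alpha_{q,k})$). Both deliver all three conclusions of the theorem.
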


\begin{proof} See Appendix.\end{proof}

Note that for the game in (\ref{eq:player q_individual_interference_constraints}),
where there are no global interference constraints (i.e., ${\pi}=0$),
it is not difficult to show that every QNE (and thus LNE) is such
that $\mathbf{p}_{q}^{\star}\neq\mathbf{0}$ for \emph{all} $q=1,\ldots,Q$.
In the presence of side constraints, instead, this is not obvious,
because of the potential unboundedness of the prices ${\pi}$ (if
the interference constraints are ``too stringent'', implying large
prices, the users may not be allowed to transmit at all{]}. Interestingly,
sufficient conditions for every QNE (LNE) of the game with side constraints
(\ref{eq:price equilibrium_vector_form}) to have $\mathbf{p}_{q}^{\star}\neq\mathbf{0}$
for some/all $q=1,\ldots,Q$ can be derived (we omit the details because
of the space limitation). These conditions have a physical interpretations:
they impose an upper bound on the (normalized) secondary-to-primary
cross-channels $|G_{q}(k)|^{2}$, so that a bounded set of equilibrium
prices exists such that the interference constraints at the primary
receiver can be satisfied with nonzero transmissions of the SUs.\vspace{-0.2cm}

\section{The Equi-sensing Case\label{sec:The-Equi-sensing-Case}\vspace{-0.2cm}}

The decision model proposed so far is based on the assumption that
only the PUs' signals are involved in the detection process performed
by the SUs, implying that the SUs are somehow able to distinguish
between primary and secondary signaling. This can be naturally accomplished
if there is a \emph{common} sensing time (still to optimize) during
which \emph{all} the SUs stay silent while sensing the spectrum. However,
the joint optimization of the sensing and transmission parameters
proposed in Sec. \ref{sub:-Game-theoretical-formulation}, in general,
leads to different optimal sensing times of the SUs, implying that
some SU may start transmitting while some others are still in the
sensing phase. Since the energy based detection as proposed in Sec.
\ref{sub:Detection-problem} is not able to discriminate between sources
of received energy, this in-band interference generated by the transmitting
SUs would confuse the energy detector and thus introduce a significant
performance degradation. To overcome this issue two different directions
can be explored, as detailed next. 

A first approach could be using more sophisticated signal processing
techniques for the SUs' sensing that look into a primary signal footprint
(e.g., modulation type, data rate, pulse shaping, or other signal
feature) to improve the detector robustness, at the cost of increased
complexity. This would allow the SUs to differentiate between primary
signals, background noise, and interference and thus possibly benefit
from adaptive signal processing for canceling SU interferers. Depending
on what a priori knowledge of the primary signal is known to the SUs,
different \emph{feature} detectors can be applied under different
scenarios and complexity requirements \cite{Ma-Li-Juang_ProcIEEE09}.
For example, ATSC digital TV signal has narrow pilot for audio and
video carriers; CDMA systems have dedicated spreading codes for pilot
and synchronization channels; OFDMA packets have preambles for packet
acquisition. Under such a priori information, the optimal detection
is given by the matched filter \cite{Levy-book-Det-Est08,Ma-Li-Juang_ProcIEEE09}.
We leave the reader the easy task to extend the proposed game theoretical
formulation to the case in which the energy detector is replaced by
 the pilot-based matched filter (note that, under mild conditions,
 the performance of the matched filter are still given by (\ref{eq:pfa_and_pd}),
but with a different expression for the $\mu_{q,k|i}$'s and $\sigma_{q,k|i}$'s
\cite{Ma-Li-Juang_ProcIEEE09}). Note, however, that the better sensing
performance of the matched filter are obtained at the cost of additional
hardware complexity: the SUs would need a dedicated receiver for every
PU class.

The second approach we propose is suitable for scenarios where feature
detection is not implementable, and thus the energy detector is the
only available option (see also Sec. \ref{sec:Extensions-and-Generalizations}
for a more general energy detector-based scheme). In such a case,
the only way for the SUs to distinguish the primary from the secondary
signals is to avoid overlapping secondary transmissions during the
sensing phase. This can be done by ``forcing'' the same sensing
time for all the SUs, which still needs to be optimized. To do that
while keeping the distributed nature of the optimization, we propose
to modify the original games as follows. \vspace{0.3cm}

\hspace{-0.7cm}%
\framebox{\begin{minipage}[t]{1\columnwidth}%
\textbf{Players' optimization}. The optimization problem of player
$q$ is: given $c>0$\vspace{-0.3cm} 
\begin{equation}
\begin{array}{lll}
\underset{\mathbf{x}_{q}}{\mbox{maximize}} &  & \hat{R}_{q}(\mathbf{x}_{q},\mathbf{x}_{-q})-{\pi}\cdot I(\mathbf{x})-\,\dfrac{{c}}{2}\,\left(\dfrac{{\hat{\tau}_{q}}}{\sqrt{{f_{q}}}}-\dfrac{{1}}{Q}\,{\displaystyle {\sum_{r=1}^{Q}}}\,\dfrac{{\hat{\tau}_{r}}}{\sqrt{{f_{r}}}}\right)^{2}\\
\mbox{subject to} &  & \mathbf{x}_{q}\triangleq\left(\wh{\tau}_{q},\,\mathbf{p}_{q},\,\wh{\boldsymbol{{\gamma}}}_{q}\right)\in\mathcal{X}_{q}.
\end{array}\vspace{-0.1cm}\label{eq:penalized_game}
\end{equation}
\textbf{Price equilibrium}. The price obeys the complementarity condition
(\ref{eq:price equilibrium_vector_form}).%
\end{minipage}}\vspace{0.4cm}

The difference with respect to the previous formulations is that now
in the objective function of each player there is an additional term
that works like a penalization in using different sensing times for
the players. Because of this penalization, one would expect that,
for sufficiently large $c$, the equilibrium of the game tends to
have equal (normalized) sensing times ${\hat{\tau}_{q}}/{\sqrt{{f_{q}}}}$'s
(and thus equal $\tau_{q}$'s, since ${\hat{\tau}_{q}}/{\sqrt{{f_{q}}}}=\sqrt{\tau_{q}}$;
see (\ref{eq:bijection}){]}, provided that such a common value is
feasible for all the players' optimization problems; this solution
indeed is the one that minimizes the loss induced by the penalization
in the payoff function of each player. This intuition is formalized
in Theorem \ref{Prop_equi_sensing_QE} below. \smallskip

\noindent \textbf{Feasibility conditions.} The first step is to derive
(sufficient) conditions guaranteeing the existence of a common value
for the sensing times $\tau_{q}$'s. We have the following: For every
$\wh{\boldsymbol{{\gamma}}}=(\wh{\boldsymbol{{\gamma}}}_{q})_{q=1}^{Q}$
for which there exist a $\mathbf{p}=(\mathbf{p}_{q})_{q=1}^{Q}$ and
$\wh{\boldsymbol{{\tau}}}=(\wh{\tau}_{q})_{q=1}^{Q}$ satisfying the
feasibility conditions ($\wh{\mbox{b}}$) and ($\wh{c}$) of the original
game in (\ref{eq:player q transformed_2}), there must exist a common
$\tau$ such that, for all $q=1,\ldots,Q$ and $k=1,\ldots N$, we
have
\begin{equation}
\dfrac{{\hat{\tau}_{q}}^{\min}}{\sqrt{{f_{q}}}}\leq\sqrt{\tau}\leq\dfrac{{\hat{\tau}_{q}}^{\max}}{\sqrt{f_{q}}},\quad\mbox{and}\quad{\frac{\sigma_{{q,k}|0}\,\wh{\gamma}_{q,k}-(\,\mu_{{q,k}|1}-{\mu}_{{q,k}|0}\,)\,\sqrt{f_{q}\,\tau}}{{\sigma}_{{q,k}|1}}}\,\leq\,\wh{\alpha}_{q,k}.\label{eq:feasibility_G_c}
\end{equation}
The first set of conditions in (\ref{eq:feasibility_G_c}) simply
postulates the existence of an overlap among the (normalized) sensing
time intervals $[{\hat{\tau}_{q}}^{\min}/\sqrt{{f_{q}}},\,{\hat{\tau}_{q}}^{\max}/\sqrt{{f_{q}}}]$,
which is necessary to guarantee the existence of a common value for
the sensing times in the original variables $\tau_{q}$'s. The second
set of conditions guarantee the existence of a common $\tau$ for
the values of feasible $\wh{\boldsymbol{{\gamma}}}$ that are candidates
to be QNE of the original game (\ref{eq:player q transformed_2}).
Note that this is less than requiring $\tau$ to satisfy also the
interference constraints ($\wh{\mbox{a}}$) in (\ref{eq:player q transformed_2})
for all feasible $\wh{\boldsymbol{{\gamma}}}$ and $\mathbf{p}$.\vspace{-0.1cm}

\begin{theorem}\label{Prop_equi_sensing_QE}Given the game in (\ref{eq:penalized_game})
with side constraints (\ref{eq:price equilibrium_vector_form}), suppose
that the feasibility conditions in (\ref{eq:feasibility_G_c}) are
satisfied. Let $\{c^{\,\nu}\}_{\nu=1}^{\infty}$ be any sequence of
positive scalars such that $\lim_{\nu\rightarrow+\infty}c^{\,\nu}=+\infty$,
and let $(\mathbf{x}^{\,\nu}\,\pi^{\,\nu},\,\boldsymbol{{\lambda}}^{\,\nu})$
be a QNE of the game for $c=c^{\,\nu}$. Then, the sequence $\{(\mathbf{x}^{\,\nu}\,\pi^{\,\nu},\,\boldsymbol{{\lambda}}^{\,\nu})\}_{\nu=1}^{\infty}$
has a limit point, denoted by $(\mathbf{x}^{\,\infty},\,{\pi}^{\,\infty},\,\boldsymbol{{\lambda}}^{\,\infty})$,
where $\mathbf{x}^{\,\infty}\triangleq(\wh{\boldsymbol{{\tau}}}^{\,\infty},\,\mathbf{p}^{\,\infty},\,\wh{\boldsymbol{{\gamma}}}^{\,\infty})$;
for every such a limit point, the following hold: 
\begin{description}
\item [{(i)}] There exists a feasible $\tau^{\star}$ such that \vspace{-0.1cm}
\begin{equation}
\dfrac{{\hat{\tau}_{q}^{\,\infty}}}{\sqrt{{f_{q}}}}=\dfrac{{\hat{\tau}_{r}^{\,\infty}}}{\sqrt{{f_{r}}}}=\sqrt{\tau^{\star}},\quad\forall r,q=1,\ldots,Q,\,\,\mathcal{\mbox{and}}\,\, r\neq q;\label{eq:equal_sensing_times}
\end{equation}

\item [{(ii)}] $\tau^{\star}$ in (\ref{eq:equal_sensing_times}) has the
following optimality properties:{\small 
\begin{equation}
\hspace{-0.3cm}\begin{array}{lll}
{\tau}^{\star}\,= & {\displaystyle {\operatornamewithlimits{\mbox{\emph{argmax}}}_{{\tau}}}} & \left\{ {\displaystyle {\sum_{q=1}^{Q}}\,}{\displaystyle \hat{R}_{q}\left(\sqrt{{f_{q}}\,\tau},\,\mathbf{p}^{\infty},\,\wh{\boldsymbol{{\gamma}}}^{\,\infty}\right)-{\pi}^{\infty}\cdot I\left(\left(\sqrt{{f_{q}}\,\tau}\right)_{q=1}^{Q},\,\mathbf{p}^{\infty},\,\wh{\boldsymbol{{\gamma}}}^{\,\infty}\right)}\right\} \\[0.25in]
 & \mbox{\emph{subject to}} & \left(\sqrt{{f_{q}}\,\tau},\,\mathbf{p}^{\infty},\,\wh{\boldsymbol{{\gamma}}}_{q}^{\infty}\right)\in\mathcal{X}_{q},\quad\forall q=1,\ldots,Q.\\[0.3in]
\end{array}\label{eq:optimality of tau_star}
\end{equation}
}{\small \par}
\end{description}
\end{theorem}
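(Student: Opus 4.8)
The plan is to treat the penalized game through its VI reformulation, exactly as in Lemma \ref{Lemma_VI-KKT} and Theorem \ref{th:QE}, and then to analyze the behaviour of the solutions as $c^\nu\to+\infty$. For each fixed $c^\nu$ the penalty $-\tfrac{c^\nu}{2}(\hat\tau_q/\sqrt{f_q}-\tfrac1Q\sum_r\hat\tau_r/\sqrt{f_r})^2$ is smooth and concave in player $q$'s own variable $\hat\tau_q$, and it changes neither the feasible set $\mathcal{X}_q$ (hence $\mathcal{S}$) nor the interference maps $I,I_q$; only the $\hat\tau$-block of the VI map $\boldsymbol\Psi$ acquires an extra gradient term. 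Thus Theorem \ref{th:QE} applies verbatim and furnishes a QNE $(\mathbf{x}^\nu,\pi^\nu,\boldsymbol\lambda^\nu)$ for every $\nu$. First I would show the whole sequence is bounded: the strategy block $\mathbf{x}^\nu$ lies in the compact polyhedron $\mathcal{Y}$, while the coercivity estimates of Theorem \ref{th:QE} that bound $\pi^\nu$ and $\boldsymbol\lambda^\nu$ are phrased in terms of the (penalty-free) $\mathbf{p}$- and interference-data of $\boldsymbol\Psi$; since the penalty perturbs only the $\hat\tau$-block and is bounded on the compact set $\mathcal{Y}$, those estimates remain valid with constants independent of $\nu$. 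Bolzano--Weierstrass then yields the limit point $(\mathbf{x}^\infty,\pi^\infty,\boldsymbol\lambda^\infty)$, and closedness of $\mathcal{X}$ gives its feasibility.

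For part (i) the crux is a single summation. Writing $u_q\triangleq\hat\tau_q/\sqrt{f_q}$ and $\bar u\triangleq\tfrac1Q\sum_r u_r$, the penalty contributes $-c^\nu(1-\tfrac1Q)(u_q^\nu-\bar u^\nu)/\sqrt{f_q}$ to the $\hat\tau_q$-component of the Lagrangian gradient. By the feasibility hypothesis (\ref{eq:feasibility_G_c}), for the feasible profile $\hat{\boldsymbol{\gamma}}^\nu$ there is a common $\tau$ with $(\sqrt{f_q\tau},\mathbf{p}_q^\nu,\hat{\boldsymbol{\gamma}}_q^\nu)\in\mathcal{Y}_q$ for all $q$; set $s^\nu\triangleq\sqrt{\tau}$. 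I would insert this point (whose only coordinate differing from $\mathbf{x}_q^\nu$ is the sensing time, and whose normalized value $s^\nu$ is common across $q$) as the test vector in the $\hat\tau_q$-row of (\ref{eq:VI_ref}). Denoting by $G_q^\nu\triangleq\partial_{\hat\tau_q}[\hat R_q-\pi^\nu I-\lambda_q^\nu I_q]$ the penalty-free sensing-time derivative at the QNE, each row reads $-\sqrt{f_q}(s^\nu-u_q^\nu)G_q^\nu+c^\nu(1-\tfrac1Q)(s^\nu-u_q^\nu)(u_q^\nu-\bar u^\nu)\ge0$. Summing over $q$ and using $\sum_q(u_q^\nu-\bar u^\nu)=0$ and $\sum_q u_q^\nu(u_q^\nu-\bar u^\nu)=\sum_q(u_q^\nu-\bar u^\nu)^2$, the cross terms collapse to give $c^\nu(1-\tfrac1Q)\sum_q(u_q^\nu-\bar u^\nu)^2\le\sum_q\sqrt{f_q}\,|s^\nu-u_q^\nu|\,|G_q^\nu|$, whose right-hand side is bounded uniformly in $\nu$ (compact strategies, bounded prices and multipliers, continuous gradients). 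Hence $\sum_q(u_q^\nu-\bar u^\nu)^2=O(1/c^\nu)\to0$, so in the limit all $u_q^\infty$ coincide; calling the common value $\sqrt{\tau^\star}$ and invoking feasibility of $\mathbf{x}^\infty$ yields (i).

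For part (ii) I would pass to the limit in the same summed condition, restricted to the diagonal family $\hat\tau_q=\sqrt{f_q\tau}$ (equivalently the common normalized sensing variable $s=\sqrt{\tau}$). The decisive point is that, although the individual penalty gradients $c^\nu(u_q^\nu-\bar u^\nu)$ need not converge, they again sum to zero (because $\sum_q(u_q-\bar u)=0$ along this direction) and so drop out of the aggregate inequality. What remains is precisely the first-order optimality condition at $\tau^\star$ of the scalar program (\ref{eq:optimality of tau_star}) with objective $\sum_q\hat R_q-\pi^\infty I$. Finally I would upgrade stationarity to the asserted global optimality: the feasible set of (\ref{eq:optimality of tau_star}) is an interval in $s$ because each $\hat P_{q,k}^{\text{miss}}$ is monotone in the sensing time, so every interference and detection constraint cuts out a half-line; and in the variable $s$ the objective is concave, the throughput terms being affine while over the feasible region the missed-detection penalty inherits the same concavity already exploited for the per-player sensing subproblem (\ref{eq:given rates_sensing_time_opt}). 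A concave maximization over an interval is certified by its first-order condition, and the monotone change of variables $\tau=s^2$ transfers the conclusion to the stated form.

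The main obstacle is the coupling of the two limits. As $c^\nu\to\infty$ the penalty gradients $c^\nu(u_q^\nu-\bar u^\nu)$ are individually unbounded, so one cannot simply pass to the limit player by player. Both difficulties are resolved by the identity $\sum_q(u_q-\bar u)=0$: used quantitatively (with the factor $c^\nu$) it forces the deviations to zero in (i), and used qualitatively (cancellation of the unbounded terms) it extracts the aggregate optimality in (ii). The two points needing most care are establishing that the price/multiplier bounds of Theorem \ref{th:QE} are insensitive to the penalty and hence uniform in $\nu$, and verifying the concavity of the limiting scalar problem that turns the limiting stationarity into genuine global optimality.
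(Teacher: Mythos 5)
Your overall architecture is the right one for this theorem: existence for each fixed $c^{\,\nu}$ via Theorem \ref{th:QE}, per-block test vectors in the VI (\ref{eq:VI_ref}) (legitimate here because $\mathcal{S}$ is a Cartesian product, and (\ref{eq:feasibility_G_c}) guarantees the common-sensing test point lies in $\mathcal{Y}_q$), the summation identity that produces $c^{\,\nu}\sum_q(u_q^{\,\nu}-\bar u^{\,\nu})^2$ with $u_q\triangleq\wh\tau_q/\sqrt{f_q}$, and the interval-plus-concavity upgrade for part (ii) (which is valid precisely because $\alpha_{q,k}\leq 1/2$ forces the argument of the missed-detection $\mathcal{Q}$ into the region where the relevant functions have one sign of curvature). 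However, the step on which everything else rests contains a genuine gap: the uniform-in-$\nu$ boundedness of $(\pi^{\,\nu},\boldsymbol{\lambda}^{\,\nu})$. Your justification$-$that the penalty ``is bounded on the compact set $\mathcal{Y}$'' so the estimates of Theorem \ref{th:QE} carry over with constants independent of $\nu$$-$is false: both the penalty and its gradient entries $c^{\,\nu}(1-\tfrac1Q)(u_q-\bar u)/\sqrt{f_q}$ scale linearly with $c^{\,\nu}$ on $\mathcal{Y}$. Moreover, the appendix estimate is obtained by pairing $\boldsymbol{\Psi}$ against the displacement from the reference point $\wh\tau_q^{\,\rm ref}=\wh\tau_q^{\,\max}$, so it \emph{does} involve the $\wh\tau$-block where the penalty lives; since the normalized values $\wh\tau_q^{\,\max}/\sqrt{f_q}$ generally differ across $q$, the penalty injects a term $c^{\,\nu}(1-\tfrac1Q)\bigl[\sum_q(u_q-\bar u)^2-\sum_q u_q^{\,\rm ref}(u_q-\bar u)\bigr]$ of uncontrolled sign and size $O(c^{\,\nu}\|u^{\,\nu}-\bar u^{\,\nu}\|)$, and the bound on $\pi^{\,\nu}I^{\max}+\sum_q\lambda_q^{\,\nu}I_q^{\max}$ degrades with $\nu$. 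Note the circularity: your part (i) needs bounded $|G_q^{\,\nu}|$, hence bounded multipliers, while repairing the coercivity bound needs exactly the decay of $\|u^{\,\nu}-\bar u^{\,\nu}\|$ that part (i) is supposed to deliver. The clean repair is penalty-free: test the VI only in the $\mathbf{p}$-block with $\mathbf{p}_q=\mathbf{0}$, use linearity of $I,I_q$ in $\mathbf{p}_q$ and the complementarity $\pi^{\,\nu}I(\mathbf{x}^{\,\nu})=0=\lambda_q^{\,\nu}I_q(\mathbf{x}_q^{\,\nu})$ to obtain
\begin{equation*}
\pi^{\,\nu}I^{\max}+\sum_{q=1}^{Q}\lambda_q^{\,\nu}I_q^{\max}\;\leq\;\sum_{q=1}^{Q}\,(\mathbf{p}_q^{\,\nu})^{T}\nabla_{\mathbf{p}_q}\hat R_q(\mathbf{x}^{\,\nu})\;\leq\;B,
\end{equation*}
with $B$ independent of $\nu$; alternatively, rerun the coercivity computation with a reference point having a \emph{common} normalized sensing time (this is where (\ref{eq:feasibility_G_c}) must be invoked), so the penalty contributes the nonnegative quantity $+\,c^{\,\nu}(1-\tfrac1Q)\sum_q(u_q-\bar u)^2$.

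A second, smaller flaw sits in part (ii) and is internally inconsistent with your own part (i): along the diagonal test direction the penalty gradients do \emph{not} sum to zero. Exactly as you computed in part (i), $\sum_q(s-u_q^{\,\nu})(u_q^{\,\nu}-\bar u^{\,\nu})=-\sum_q(u_q^{\,\nu}-\bar u^{\,\nu})^2$, so the penalty contributes $-\,c^{\,\nu}(1-\tfrac1Q)\sum_q(u_q^{\,\nu}-\bar u^{\,\nu})^2$, a quantity you only know to be \emph{bounded}, not vanishing. The argument still goes through, but for a sign reason rather than a cancellation reason: this contribution is nonpositive, so it can be dropped while preserving the inequality, leaving $\sum_q\sqrt{f_q}\,(s-u_q^{\,\nu})(-G_q^{\,\nu})\geq c^{\,\nu}(1-\tfrac1Q)\sum_q(u_q^{\,\nu}-\bar u^{\,\nu})^2\geq 0$, in which one may pass to the limit. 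Finally, what this limit yields is stationarity of the \emph{Lagrangian-type} aggregate $\sum_q\bigl[\hat R_q-\pi^{\,\infty}I-\lambda_q^{\,\infty}I_q\bigr]$ over the diagonal segment of $\mathcal{Y}$, not directly of the objective in (\ref{eq:optimality of tau_star}); you must keep the $-\lambda_q^{\,\infty}I_q$ terms through the concavity step and only then remove them, using the limiting complementarity $\lambda_q^{\,\infty}I_q(\mathbf{x}_q^{\,\infty})=0$ together with $I_q\leq 0$ on the feasible set of (\ref{eq:optimality of tau_star}), to reach the stated global optimality of $\tau^{\star}$.
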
\vspace{-0.5cm}

Note that (\ref{eq:equal_sensing_times}) states that in the limit
\emph{all} the original sensing time variables $\tau_{q}$ must be
equal to $\tau^{\star}$ {[}recall that ${\hat{\tau}_{q}}/{\sqrt{{f_{q}}}}=\sqrt{\tau_{q}}$,
see (\ref{eq:bijection}){]}, implying that there always exists a
sufficiently large $c$ such that one can reach a QNE of game (\ref{eq:penalized_game})
having sensing times that differ from their average by the desired
accuracy. Moreover, such a common sensing time $\tau^{\star}$ is
optimal in the sense given by (\ref{eq:optimality of tau_star}):
${\tau}^{\star}$ is the unique maximizer of the sum (priced) throughput
of the \emph{original }game (\ref{eq:player q}), satisfying the price
and interference constraints, while keeping the players' powers, thresholds,
and price fixed to $\mathbf{p}^{\,\infty}$$,$ $\wh{\boldsymbol{{\gamma}}}_{q}^{\infty}$,
and $\pi^{\,\infty}$, respectively. In the companion paper \cite{Pang-Scutari-NNConvex_PII},
we focus on distributed algorithms to compute such limit points.\vspace{-0.3cm}

\section{Extension of the Framework\label{sec:Extensions-and-Generalizations}\vspace{-0.2cm}}

The framework presented in this paper may be extended to cover more
general settings, without affecting the validity of the obtained results.
In this section, we briefly discuss some of these extensions. \vspace{-0.4cm}

\paragraph*{Composite\emph{ }hypothesis testing and robust detection.}

The sensing model introduced in Sec. \ref{sub:Detection-problem}
can be generalized to the case of multiple active PUs, and the presence
of device-level uncertainties (e.g., uncertainty in the power spectral
density of the PUs' signals and thermal noise) as well as system level
uncertainties (e.g., lack of knowledge of the number of active PUs).
In this more general setting, there are $2^{P}$ configurations of
possibly active PUs associated to the elements of the power set $\mathcal{P}(P)$
of $\{1,\ldots,P\}$;%
\footnote{The power set of a given set $\mathcal{S}$ is the set of all subsets
of $\mathcal{S}$, including the empty set and \emph{$\mathcal{S}$
}itself.%
} we then propose to formulate the spectrum sensing problem of the
SUs as a \emph{composite} hypothesis testing, based on the following
two sets of hypotheses: for each SU $q=1,\ldots,Q,$ at carrier $k=1,\ldots,N$
and time index $n=1,2,\ldots,K_{q},$\emph{\vspace{-.3cm}} 
\begin{equation}
\begin{array}{l}
\mbox{(PU signal absent) }\,\,\mathcal{H}_{0,k}:\quad y_{q,k}[n]=w_{q,k}[n]\\
\mbox{(PU signal present) }\mathcal{H}_{1,k}:\quad y_{q,k}[n]=I_{q,k|\mathcal{K}}[n]+w_{q,k}[n],\,\,\mbox{for some }\mathcal{K}\in\mathcal{P}(P)\backslash\{{\emptyset}\}\vspace{-0.2cm}
\end{array}\label{eq:Hyp_testing_composite_general}
\end{equation}
where $\mathcal{H}_{0,k}$ represents the absence of any primary signal
over the subcarrier $k$, and $\mathcal{H}_{1,k}$ represents the
presence of (at least) one PU, with $I_{q,k|\mathcal{K}}[n]$ being
the signal received by SU $q$ over carrier $k$ due to the presence
of the active PUs, indexed by the elements of $\mathcal{K}$.

Under the assumptions that i) the noise variance over each subcarrier
is known at the receiver of each SU within a given uncertainty interval;
and ii) the SUs do not know the current number of active PUs out of
$P$ PUs (the set $\mathcal{K}$), we proved that the decision rule
(\ref{eq:energy_detector_test}), with $y_{q,k}[n]$ given in (\ref{eq:Hyp_testing_composite_general}),
is a \emph{Uniformly Most Powerful} (UMP) test \cite{Scutari-Pang_DSP11}.
Roughly speaking, this means that (\ref{eq:energy_detector_test})
is optimal for both the false alarm and detection probabilities, in
the sense that it reaches the desired false alarm rate (size of the
test) while maximizing the detection probability \emph{over the entire
range of noise uncertainty and uncertainty on the set} \emph{of the
active PUs}. In \cite{Scutari-Pang_DSP11}, we showed that the (worst-case)
performance of the proposed test are formally still given by (\ref{eq:pfa_and_pd}),
but with a different expression for the constants $\mu_{q,k|i}$'s
and $\sigma_{q,k|i}^{2}$'s. Thus, the analysis developed in the previous
sections apply also to this more general model; because of the space
limitation, we refer to \cite{Scutari-Pang_DSP11} for details.\vspace{-0.4cm}

\paragraph*{Game-theoretical formulations in the presence of multiple PUs.}

To simplify the presentation, we have thus far considered scenarios
where there is only one active PU and a single global interference
constraint in the form of (\ref{eq:global_interference_constraints_2}).
The proposed game-theoretical formulations can be readily extended
to the case of multiple active PUs and additional local/global interference
constraints. For instance, suppose that there are (at most) $P$ active
PUs along with their associated local and/or global interference constraints,
e.g., given respectively by (we write the constraints directly in
the transformed variables $\wh{\gamma}_{q,k}$ and $\wh{\tau}_{q}$):
for each $q=1,\ldots,Q$,\emph{\vspace{-.2cm}}

\emph{
\begin{equation}
I_{q}^{\,(p)}(\wh{\tau}_{q},\,\mathbf{p}_{q},\,\wh{\boldsymbol{{\gamma}}}_{q})\triangleq{\displaystyle {\sum_{k=1}^{N}}\,\wh{P}_{q,k}^{\text{{miss}}}\left(\wh{\gamma}_{q,k},\wh{\tau}_{q}\right)\cdot w_{q,k}^{\,(p)}\cdot p_{q,k}\,-\,{I}_{q}^{\text{\ensuremath{\max}}^{(p)}}\leq0},\quad p=1,\ldots,P,\vspace{-.2cm}\label{eq:interference_constraints_multiple_PUs}
\end{equation}
}and\emph{\vspace{-.4cm}}

\emph{
\begin{equation}
I^{\,(p)}(\wh{\boldsymbol{{\tau}}},\,\mathbf{p},\,\wh{\boldsymbol{{\gamma}}})\triangleq\sum_{q=1}^{Q}{\displaystyle {\sum_{k=1}^{N}}\,\wh{P}_{q,k}^{\text{{miss}}}\left(\wh{\gamma}_{q,k},\wh{\tau}_{q}\right)\cdot w_{q,k}^{\,(p)}\cdot p_{q,k}\,-\,{I}^{\text{\ensuremath{\max}}^{(p)}}\leq0},\quad p=1,\ldots,P,\vspace{-.2cm}\label{eq:interference_constraints_multiple_PUs_global}
\end{equation}
}where ${I}_{q}^{\text{\ensuremath{\max}}^{(p)}}$ {[}or ${I}^{\text{\ensuremath{\max}}^{(p)}}$
{]} is the maximum average interference allowed to be generated by
the SU $q$ {[}or all the SU's{]} that is tolerable at the primary
receiver $p$; $w_{q,k}^{\,(p)}$'s are a given set of positive weights
(possibly different for each PU $p$); and $\wh{P}_{q,k}^{\text{{miss}}}\left(\wh{\gamma}_{q,k},\wh{\tau}_{q}\right)$
is the missed detection probability in the transformed variables,
defined in (\ref{eq:P_miss_new}). Average \emph{per-carrier} local/global
interference constraints can also be introduced \cite{Scutari-Pang_DSP11}. 

To cast the system design in the general formulation (\ref{eq:player q transformed 1})-(\ref{eq:price equilibrium_vector_form}),
we can proceed as in Sec. \ref{sub:Game-with-pricing}. Similarly
to (\ref{eq:set_Xq}) and (\ref{eq:map_interference}), we introduce
the feasible set of local constraints of each user $q$, denoted now
by $\wh{\mathcal{X}}_{q}$, 
\[
\wh{\mathcal{X}}_{q}\triangleq\left\{ (\wh{\tau}_{q},\,\mathbf{p}_{q},\,\wh{\boldsymbol{{\gamma}}}_{q})\in\mathcal{Y}_{q}\,\,|\,\, I_{q}^{\,(p)}(\wh{{\tau}_{q}},\,\mathbf{p}_{q},\,\wh{\boldsymbol{{\gamma}}}_{q})\leq0,\quad\forall p=1,\ldots,P\right\} ,
\]
with the convex part $\mathcal{Y}_{q}$ defined as in (\ref{eq:def_Y_q});
and the interference violation (column) \emph{vector} function $\mathbf{I}(\wh{\boldsymbol{{\tau}}},\,\mathbf{p},\,\wh{\boldsymbol{{\gamma}}})$
grouping all the global interference constraints (\ref{eq:interference_constraints_multiple_PUs_global}),
and defined as 
\begin{equation}
\mathbf{I}(\wh{\boldsymbol{{\tau}}},\,\mathbf{p},\,\wh{\boldsymbol{{\gamma}}})\triangleq\left(I^{\,(p)}(\wh{\boldsymbol{{\tau}}},\,\mathbf{p},\,\wh{\boldsymbol{{\gamma}}})\right)_{p=1}^{P}.\label{eq:vector interference function}
\end{equation}
Instead of having a single price variable, in the presence of multiple
global interference constraints, we associate a different price $\pi^{(p)}$
to each global interference constraint $I^{\,(p)}(\wh{\boldsymbol{{\tau}}},\,\mathbf{p},\,\wh{\boldsymbol{{\gamma}}})$.
Thus, we will have a vector of prices$-$the (column) vector $\boldsymbol{{\pi}}\triangleq(\pi_{p})_{p=1}^{P}-$to
be optimized along with multiple price clearance conditions (one for
each pair price/interference constraint). Using the above notation,
the resulting game theoretical formulation is the natural generalization
of (\ref{eq:player q transformed 1})-(\ref{eq:price equilibrium_vector_form})
and is given next. 

\hspace{-0.6cm}%
\framebox{\begin{minipage}[t]{0.98\columnwidth}%
\label{Players'-optimization-problems_multiple_PUs}\textbf{Players'
optimization} \textbf{problems}. The optimization problem of player
$q$ is:\vspace{-0.2cm}
\begin{equation}
\begin{array}{ll}
{\displaystyle {\operatornamewithlimits{\mbox{maximize}}_{\wh{\tau}_{q},\,\mathbf{p}_{q},\,\hat{{\boldsymbol{{\gamma}}}}_{q}}}} & \hat{R}_{q}\left(\wh{\tau}_{q},\,\mathbf{p},\,\hat{{\boldsymbol{{\gamma}}}}_{q}\right)-{\displaystyle {\displaystyle \boldsymbol{{\pi}}^{T}\,\mathbf{I}(\wh{\boldsymbol{{\tau}}},\,\mathbf{p},\,\wh{\boldsymbol{{\gamma}}})}}\vspace{-0.3cm}\\[0.25in]
\mbox{subject to} & \left(\wh{\tau}_{q},\,\mathbf{p}_{q},\,\hat{{\boldsymbol{{\gamma}}}}_{q}\right)\in\wh{\mathcal{X}}_{q}.\\[5pt]
\end{array}\vspace{-0.2cm}\label{eq:player q transformed_2-1}
\end{equation}
\textbf{Price equilibrium}. The price obeys the following complementarity
conditions:\vspace{-0.6cm}

\begin{flushleft}
\begin{equation}
0\,\leq\,\boldsymbol{{\pi}}\,\perp\,-\,\mathbf{I}(\wh{\boldsymbol{{\tau}}},\,\mathbf{p},\,\wh{\boldsymbol{{\gamma}}})\geq\mathbf{0}.\label{eq:price equilibrium_new-1}
\end{equation}

\par\end{flushleft}%
\end{minipage}}\bigskip

The analysis of this more general game can be addressed following
the same procedure as introduced in Sec. \ref{sec:NE-LNE-QE} for
the game (\ref{eq:player q transformed 1})-(\ref{eq:price equilibrium_vector_form}),
and thus is omitted.

\section{Numerical Results\label{sec:Numerical-Results}\vspace{-0.2cm}}

In this section, we provide some numerical results to illustrate our
theoretical findings. More specifically, we compare the performance
achievable at the QNE of the proposed game with those achievable by
the state-of-the-art decentralized \cite{Pang-Scutari-Palomar-Facchinei_SP_10}
(special cases are those in \cite{Luo-Pang_IWFA-Eurasip,Scutari-Palomar-Barbarossa_SP08_PI,Scutari-Palomar-Barbarossa_AIWFA_IT08})
and centralized \cite{SchmidtShiBerryHonigUtschick-SPMag} schemes
proposed in the literature for similar problems; such schemes \emph{do
not perform any sensing optimization} using thus all the frame length
for the transmission, and the QoS of the PUs is preserved by imposing
(deterministic) interference constraints (we properly modified the
algorithms in \cite{SchmidtShiBerryHonigUtschick-SPMag} to include
the interference constraints in the feasible set of the optimization
problem). Interestingly, the proposed design of CR systems based on
the\emph{ }joint optimization of the sensing and transmission strategies
is shown to outperform \emph{both centralized and decentralized current
CR designs}, which validates our new formulations. We then show an
example of the optimal sensing/throughput trade-off achievable at
the QNE. Finally, we compare the sum-throughput achievable by the
SUs in the presence of local and global interference constraints {[}game
formulation (\ref{eq:player q_individual_interference_constraints})
vs. (\ref{eq:penalized_game}){]}, which sheds light on the achievable
trade-off between signaling and performance. 

All the numerical results here are obtained using the algorithms described
in the companion paper \cite{Pang-Scutari-NNConvex_PII}. Note that
some of the proposed algorithms require some signaling among the users
in the form of consensus schemes; when this happens, some (\emph{finite})
time of the available (sensing-transmission) frame $T$, let us say
$T_{\text{{cons}}}$, needs to be allocated for this purpose. The
interesting result is that our consensus implementation is proved
to converge in a \emph{finite }number of iterations ($T_{\text{{cons}}}$
is then bounded). In order to make our numerical comparisons fair,
we included this time loss in the throughput functions, which are
still given by (\ref{eq:Opportunistic-throughtput}), but $1-\tau_{q}/T$
is replaced by $1-(\tau_{q}+T_{\text{{cons}}})/T$, with $T_{\text{{cons}}}\leq T$
and $\tau_{q}\leq T-T_{\text{{cons}}}$. \medskip{}

\noindent \textbf{Example \#1: How good is a QNE?} In Figure \ref{Fig_snr_gain}
we compare the performance achievable at a QNE of the game (\ref{eq:penalized_game})
with those achievable at the stationary solutions of the sum-rate
maximization problem subject to interference constraints \cite{SchmidtShiBerryHonigUtschick-SPMag}
{[}subplot (a){]} and at the NE of the game in \cite{Pang-Scutari-Palomar-Facchinei_SP_10}
{[}subplot (b){]}. We consider a hierarchical CR network where there
are two PUs (the base stations of two cells) and fifteen SUs, randomly
distributed in the cells. The (cross-)channels among the secondary
links and between the primary and the secondary links are simulated
as FIR filter of order $L=8$, where each tap has variance equal to
$1/L^{2}$; the available bandwidth is divided in $N=1024$ subchannels.
For the sake of simplicity we consider only individual interference
constraints (\ref{eq:individual_overal_interference_constraint}),
assuming the same interference limit $I^{\text{{max}}}$ for all SUs.
Moreover we impose for each player the same false-alarm rate over
all the subcarriers (possibly different among the players). To quantify
the throughput gain achievable at the QNE of the proposed game, in
Figure \ref{Fig_snr_gain}, we plot the (\%) ratio $(SR_{QE}-SR)/SR$
versus the (normalized) interference constraint bound $P/I^{\text{\ensuremath{\max}}}$
($P_{q}=P_{r}=P$ for all $q\neq r$), for different values of the
SNR detection $\texttt{snr}_{d}=\sigma_{I_{q,k}}^{2}/\sigma_{q,k}^{2}$,
where $SR_{QE}$ is the sum-throughput achievable at the QNE of the
proposed game (with $c=100$) whereas $SR$ is either the sum-rate
at a stationary solution of the social sum-rate maximization \cite{SchmidtShiBerryHonigUtschick-SPMag}
subject to interference constraints {[}subplot (a){]} or the sum-rate
at the NE of the game in \cite{Pang-Scutari-Palomar-Facchinei_SP_10}
{[}subplot (b){]}. From the picture, we clearly see that the proposed
joint optimization of the sensing and transmission parameters yields
a considerable performance improvement over the current state-of-the-art
CR \emph{centralized and decentralized} designs, especially when the
interference constraints are stringent. These results support  the
proposed novel formulation and concept of QNE.\medskip{}

\noindent \textbf{Example \#2: Sensing/throughput trade-off.} Figure
\ref{Fig_sens_through_tradeoff} shows an example of the expected
trade-off between the sensing time and the achievable throughput.
More specifically, in the picture, we plot the (normalized) sum-throughput
achieved at a QNE by a player of the game versus the (normalized)
\emph{common} sensing time, for different values of the (normalized)
total interference constraint (the setup is the same as in Figure
\ref{Fig_snr_gain}). According to the picture, there exists an optimal
duration for the (common) sensing time at which the throughput of
each SU is maximized. Moreover, as expected, more stringent interference
constraints impose lower missed detection probabilities as well as
false-alarm rates; requirement that is met by increasing the sensing
time (i.e., making the detection more accurate). This is clear in
the picture where one can see that the optimal sensing time duration
increases as the interference constraints increase. In the same figure
we also plot the sum-throughput achieved at the QNE of the game (\ref{eq:penalized_game})
setting $c=100$ (square markers in the plot). Interestingly, the
proposed approach based on a penalty function leads to performance
comparable with those achievable by a centralized approach that computes
the optimal common sensing time obtained by a discrete search of such
a time.\medskip{}

\noindent \textbf{Example \#3: Global vs. local interference constraints.
}Global interference constraints impose less stringent conditions
on the transmit power of the SUs than those imposed by the individual
interference constraints, implying better throughput performance of
the SUs (at the price however of more signaling among the SUs, as
quantified in our companion paper \cite{Pang-Scutari-NNConvex_PII}).
Figure \ref{Fig_exPSDprofile} confirms this intuition; in the subfigure(a)
we plot the average (normalized) sum-throughput of the SUs achievable
in the game (\ref{eq:player q_individual_interference_constraints})
{[}only local interference constraints{]} and (\ref{eq:penalized_game})
{[}global interference constraints{]} as a function of the maximum
tolerable interference at the primary receivers, within the same setup
of Figure 1 (the curves are averaged over 300 random i.i.d. Gaussian
channel realizations). In (\ref{eq:player q_individual_interference_constraints}),
the interference thresholds ${I}_{q}^{\text{{\ensuremath{\max}}}}$
are set ${I}_{q}^{\text{{\ensuremath{\max}}}}={I}^{\text{{\ensuremath{\max}}}}/Q$
for all $q$ and both PUs, so that all the SUs generate at the primary
receivers the same average interference level and the aggregate average
interference satisfies the imposed interference threshold ${I}^{\text{{\ensuremath{\max}}}}$.
In Figure \ref{Fig_exPSDprofile}(b), we show an example of the average
(normalized) interference profile measured at one of the (two) primary
receivers, obtained solving the game (\ref{eq:player q_individual_interference_constraints})
and (\ref{eq:penalized_game}). We clearly see from both pictures
that, as expected, global interference constraints are less conservative
than the local ones, yielding thus better performance of the SUs;
this however comes at the price of some signaling among the SUs. 
\begin{figure}
\center\includegraphics[height=6.5cm]{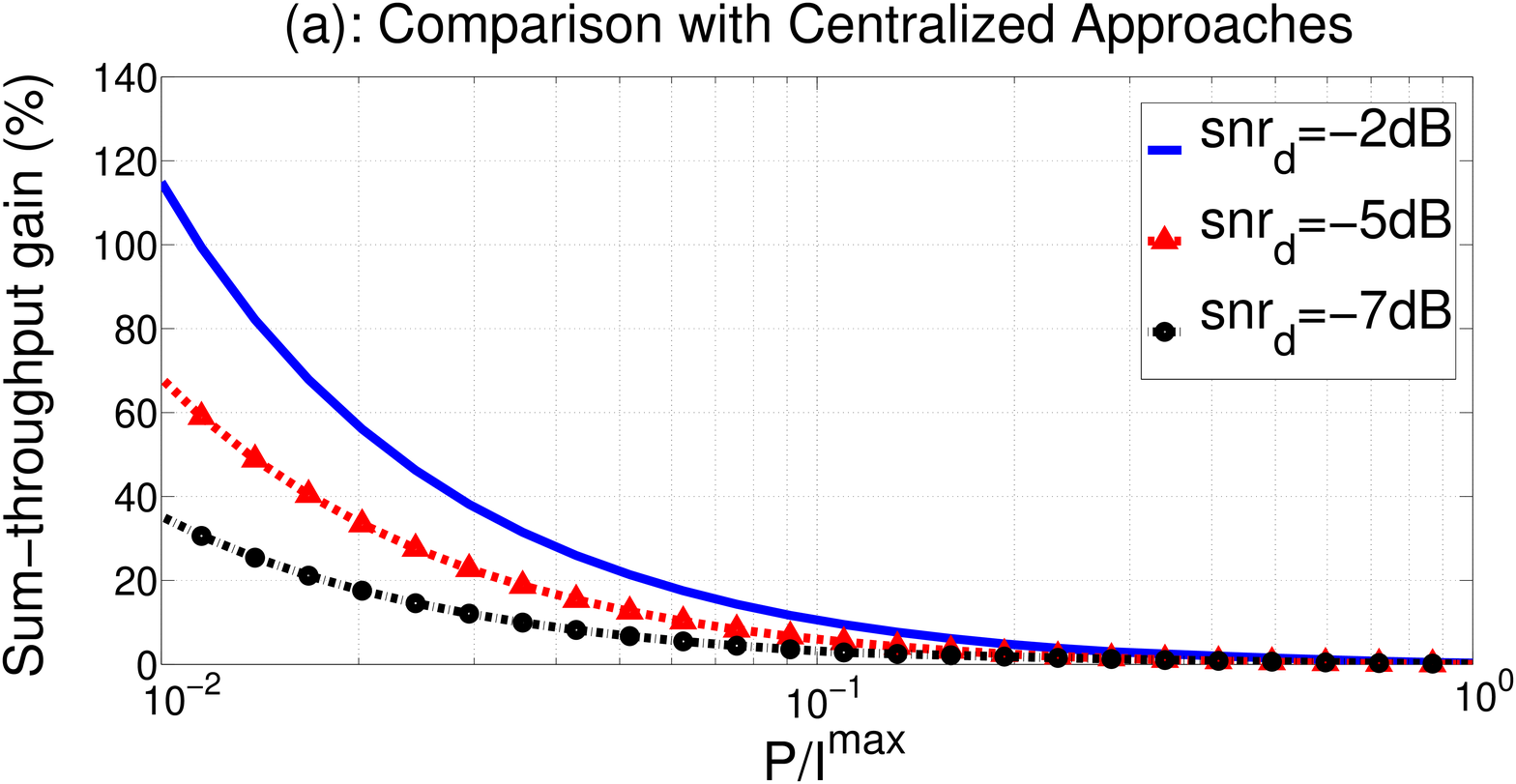}\vspace{0.5cm}\hspace{-0.5cm}\includegraphics[height=6.8cm]{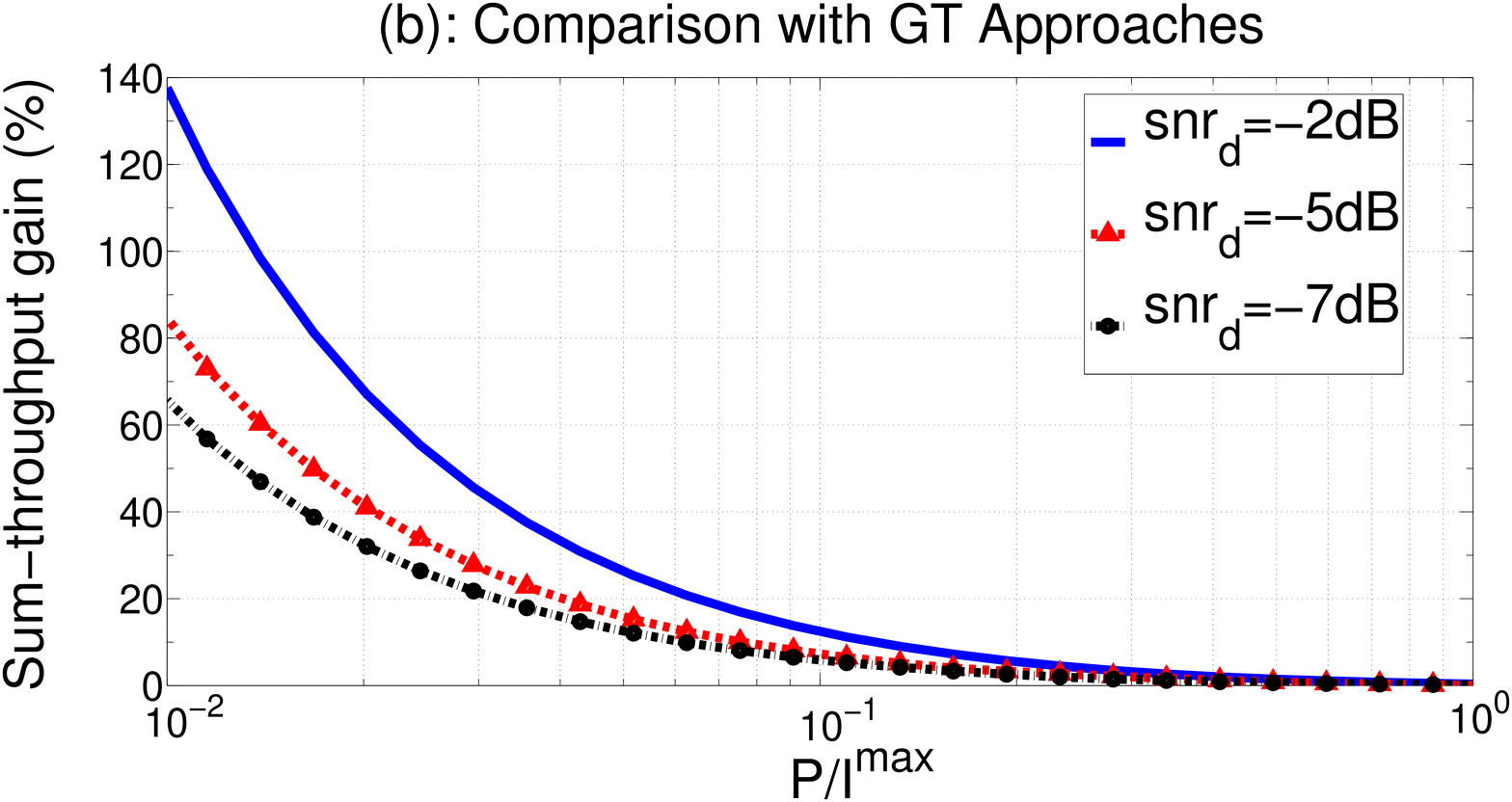}

{\footnotesize \caption{{\small Performance achievable at the QNE of (}\ref{eq:penalized_game}{\small ):
sum-rate gain (\%) versus the normalized interference, for different
values of the detection SNR and $c=100$}; {\small comparison with
centralized solutions \cite{SchmidtShiBerryHonigUtschick-SPMag} {[}subplot
(a){]} and game theoretical solutions \cite{Pang-Scutari-Palomar-Facchinei_SP_10}
{[}subplot (b){]}.}\label{Fig_snr_gain}}
}
\end{figure}
\begin{figure}[h]
\vspace{-0.2cm}\center\includegraphics[height=7.3cm]{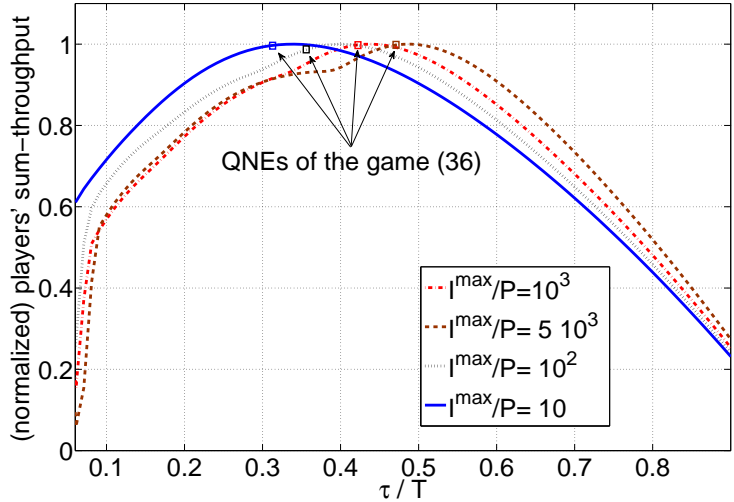}

\caption{{\small Example of optimal sensing/throughput trade-off: Normalized
}throughput{\small{} versus the normalized sensing time, for different
values of the (normalized) interference threshold. The square markers
correspond to the QNE of the game (}\ref{eq:penalized_game}{\small ),
achieved with $c=100$.}\label{Fig_sens_through_tradeoff}}
\end{figure}
\begin{figure}[H]
\vspace{-0.6cm}\hspace{-0.3cm}\includegraphics[height=7cm]{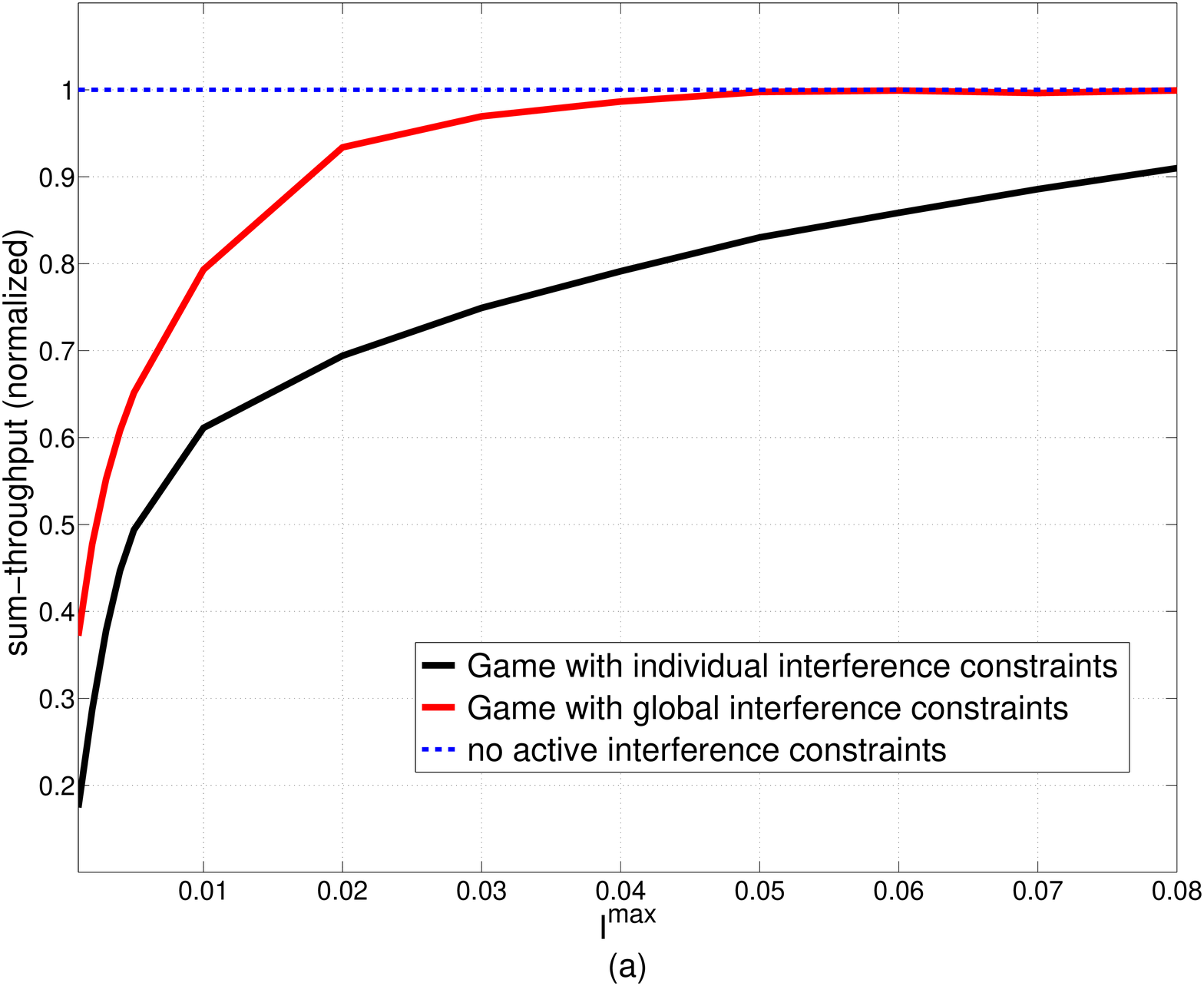}\hspace{-.8cm}\includegraphics[height=7cm]{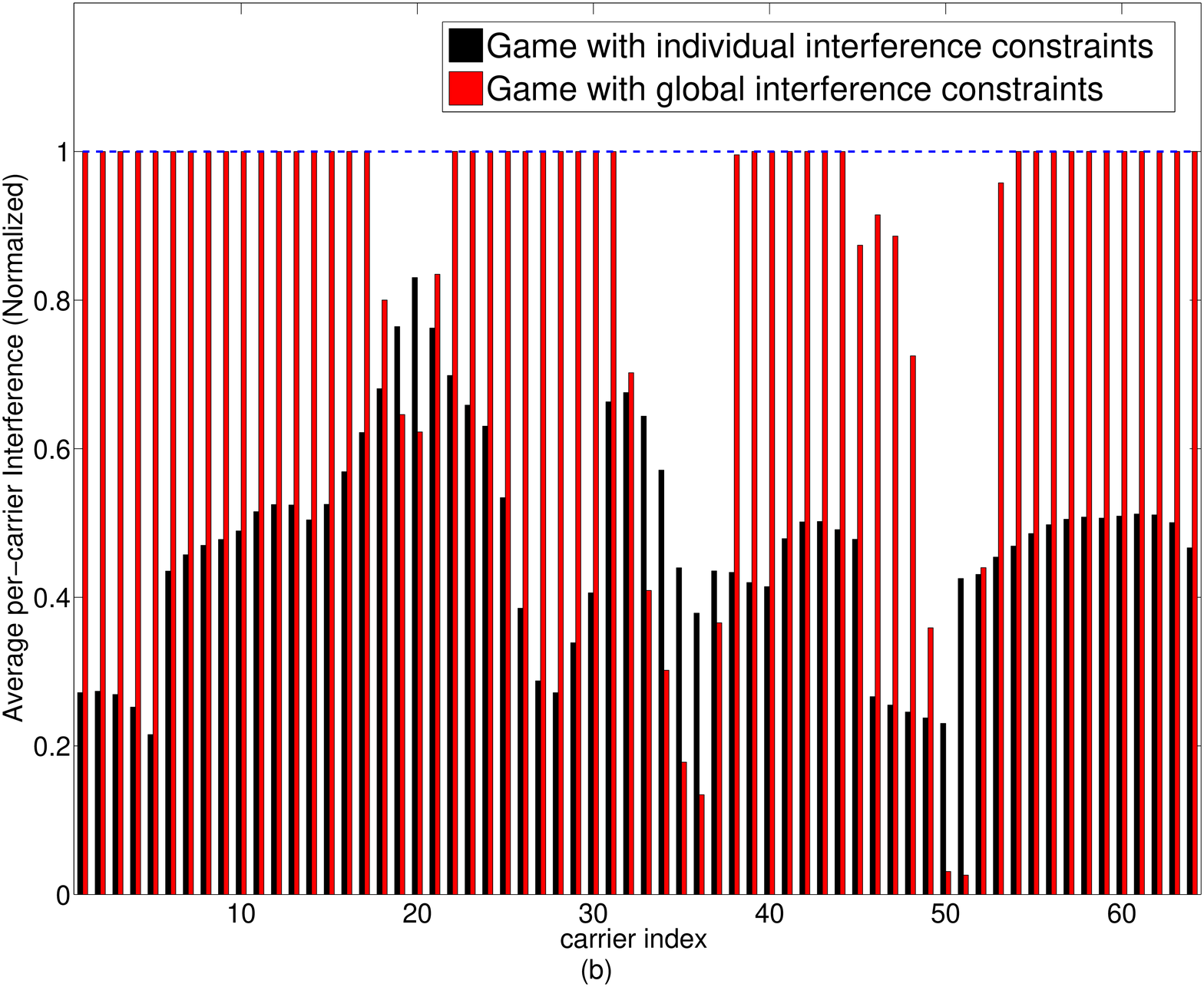}\vspace{-.2cm}

{\footnotesize \caption{{\small Global versus individual interference constraints {[}game
(\ref{eq:player q_individual_interference_constraints}) versus game
(\ref{eq:penalized_game}){]}. Subplot (a): average (normalized) sum-throughput
as a function of the maximum tolerable interference $I^{\max}$ at
the primary receivers. Subplot (b): Average per-carrier probabilistic
(normalized) interference generated by the solution of the games at
one of the PU's receiver. }\label{Fig_exPSDprofile}}
}
\end{figure}

\section{Conclusions\label{sec:Conclusions}}

\noindent In this paper, we proposed a novel class of Nash problems
wherein each SU aims to maximize his own opportunistic throughput
by choosing jointly the sensing duration, the detection thresholds,
and the vector power allocation over a multichannel link, under several
interference constraints, either local or global. In particular, to
enforce global interference constraints while keeping the optimization
as decentralized as possible, we proposed a pricing mechanism that
penalizes the SUs in violating the global interference constraints.
The resulting games belong to the class of nonconvex games with unbounded
pricing variables, whose analysis cannot be addressed using mathematical
tools from the existing game theory literature. To deal with these
issues, we proposed the use of a relaxed equilibrium concept$-$the
QNE$-$and studied its properties and connection with LNE. In particular,
we proved that the proposed games always have a QNE, even when a (L)NE
may not exist. We then validated the QNE both theoretically and numerically:
i) we proved that the QNE has some local optimality properties; and
ii) numerical results show the superiority of the proposed design
with respect to the state-of-the-art centralized and decentralized
resource allocation algorithms for CR systems. Distributed solution
schemes for computing such equilibria along with their convergence
properties are proposed and studied  in the companion paper \cite{Pang-Scutari-NNConvex_PII}. 

\appendix

\section*{Appendix: Proof of Theorem \ref{th:QE}\label{sec:Proof-of-TheoremQNE}}

\noindent Because of the space limitation, we provide only a sketch
of the proof. We need to show that the VI$(\mathcal{S},\boldsymbol{\Psi})$
in (\ref{eq:VI_ref}) has a solution. According to \cite[Proposition~2.2.3]{Facchinei-Pang_FVI03},
it is sufficient to find a tuple $(\mathbf{x}^{\text{{ref}}},\,\pi^{\text{{ref}}},\,\boldsymbol{{\lambda}}^{\text{{ref}}})\in\mathcal{S}$,
with $\mathbf{x}^{\text{{ref}}}\triangleq(\wh{\tau}_{q}^{\,\text{{ref}}},\,\mathbf{p}_{q}^{\text{\,{ref}}},\,\wh{\boldsymbol{{\gamma}}}_{q}^{\,\text{{ref}}})_{q=1}^{Q}$,
such that the set 
\begin{equation}
\mathcal{L}_{\leq}\,\triangleq\,\left\{ \,\mathbf{(\mathbf{x},\,\pi,\,\boldsymbol{{\lambda}})}\,\in\,\mathcal{S}\,\mid\,\left(\begin{array}{l}
\mathbf{x}-\mathbf{x}^{\,\text{{ref}}}\\
\pi-\pi^{\,\text{{ref}}}\\
\boldsymbol{{\lambda}}-\boldsymbol{{\lambda}}^{\,\text{{ref}}}
\end{array}\right)^{T}\boldsymbol{\Psi}\left(\mathbf{x}^{\,\text{{ref}}},\,\pi^{\,\text{{ref}}},\,\boldsymbol{{\lambda}}^{\,\text{{ref}}}\right)\,\leq\,0\,\right\} \label{eq:set_L}
\end{equation}
is bounded. We choose $(\mathbf{x}^{\text{\,{ref}}},\,\pi^{\,\text{{ref}}},\,\boldsymbol{{\lambda}}^{\,\text{{ref}}})$
by letting all its components to be zero except for the $\wh{\tau}$-
and $\wh{\gamma}$-components which we define as follows: $\wh{\gamma}_{q,k}^{\,\text{{ref}}}\triangleq\wh{\beta}_{q,k}$
and $\wh{\tau}_{q}^{\,{\rm ref}}\,\triangleq\wh{\tau}_{q}^{\,\max}$
for all $k=1,\ldots,N$ and $q=1,\ldots,Q.$ Invoking the feasibility
conditions (\ref{eq:nec_suff_feasibility_cond}) and using the above
definitions, we have: for all $k=1,\ldots,N$ and $q=1,\ldots,Q,$
\begin{equation}
\wh{\gamma}_{q,k}-\wh{\gamma}_{q,k}^{\,\text{{ref}}}\,\triangleq\,{\displaystyle {\frac{\sigma_{q,k|0}\,(\,\wh{\gamma}_{q,k}-\wh{\beta}_{q,k}\,)-(\,{\mu}_{{q,k}|1}-\mu_{{q,k}|0}\,)\,(\,\wh{\tau}_{q}-\wh{\tau}_{q}^{\,\text{{\rm ref}}}\,)}{{\sigma}_{q,k|1}}}\,\geq\,0.}\label{eq:zeta-ref}
\end{equation}
 Let $\mathbf{(\mathbf{x},\,\pi,\,\boldsymbol{{\lambda}})}\in\mathcal{L}_{\leq}$,
with the tuple $(\mathbf{x}^{\text{\,{ref}}},\,\pi^{\,\text{{ref}}},\,\boldsymbol{{\lambda}}^{\,\text{{ref}}})$
defined above. Clearly, the players' variables $\mathbf{x}=(\wh{\boldsymbol{{\tau}}},\,\mathbf{p},\,\wh{\boldsymbol{{\gamma}}})$
are bounded: for every $q=1,\ldots,Q$, we have that $(\wh{\boldsymbol{{\tau}}}_{q},\,\mathbf{p}_{q},\,\wh{\boldsymbol{{\gamma}}}_{q})\in\mathcal{X}_{q}$
implies 
\begin{equation}
\wh{\tau}_{q}^{\max}\leq\wh{\tau}_{q}\leq\wh{\tau}_{q}^{\max},\quad\,\mathbf{0}\leq\mathbf{p}\leq\mathbf{p}^{\max},\quad\mbox{and}\quad\wh{\beta}_{q,k}\leq\wh{\gamma}_{q,k}\leq{\displaystyle {\frac{\sigma_{q,k|1}\,\wh{\alpha}_{q,k}+(\,{\mu}_{{q,k}|1}-\mu_{{q,k}|0}\,)\,\wh{\tau}_{q}^{\max}}{{\sigma}_{q,k|0}}}\triangleq\wh{\gamma}_{q,k}^{\max},\,\,\forall k.}\label{eq:bounds}
\end{equation}
Next, we show that the multipliers $\boldsymbol{{\lambda}}$ and the
price $\pi$ are also bounded. Multiplying out $\left[(\mathbf{x}-\mathbf{x}^{\,\text{{ref}}})^{T},\right.$
$\left.(\pi-\pi^{\,\text{{ref}}})^{T},\,(\boldsymbol{{\lambda}}-\boldsymbol{{\lambda}}^{\,\text{{ref}}})^{T}\right]\,\boldsymbol{\Psi}\left(\mathbf{x}^{\,\text{{ref}}},\,\pi^{\,\text{{ref}}},\,\boldsymbol{{\lambda}}^{\,\text{{ref}}}\right)$,
using the expression of $\boldsymbol{\Psi}\left(\mathbf{x}^{\,\text{{ref}}},\,\pi^{\,\text{{ref}}},\,\boldsymbol{{\lambda}}^{\,\text{{ref}}}\right)$,
canceling some terms, and collecting the remaining terms, we deduce
{\small 
\begin{equation}
\begin{array}{rl}
0\geq & 2\,{\displaystyle {\sum_{q=1}^{Q}}}\dfrac{{\wh{\tau}_{q}\,(\,\wh{\tau}_{q}-\wh{\tau}_{q}^{\,{\rm ref}}\,)}}{f_{q}\, T_{q}}\,{\displaystyle {\sum_{k=1}^{N}}}\left(\,1-\mathcal{Q}(\wh{\gamma}_{q,k})\,\right)\, r_{q,k}\left(\mathbf{p}\right)\,+\,{\displaystyle {\sum_{q=1}^{Q}}}\,{\displaystyle {\sum_{k=1}^{N}}}\,\left(\,1-{\displaystyle {\frac{\wh{\tau}_{q}^{\,2}}{f_{q}\, T_{q}}}\,}\right)\,\left(\,\wh{\gamma}_{q,k}-\wh{\beta}_{q,k}\,\right)\,\mathcal{Q}^{\,\prime}(\wh{\gamma}_{q,k})\, r_{q,k}\left(\mathbf{p}\right)\,\vspace{-0.4cm}\\[0.4in]
 & -\,{\displaystyle {\sum_{q=1}^{Q}}}\,{\displaystyle {\sum_{k=1}^{N}}}\,\left(\,1-{\displaystyle {\frac{\wh{\tau}_{q}^{2}}{f_{q}\, T_{q}}}\,}\right)\,{\displaystyle {\frac{\left(\,1-\mathcal{Q}(\wh{\gamma}_{q,k})\,\right)\,|{H}_{qq}(k)|^{2}p_{q}(k)}{{\sigma}_{q,k}^{2}+\sum_{r=1}^{Q}\,|{H}_{rq}(k)|^{2}\, p_{r}(k)}}}-\,{\displaystyle {\sum_{q=1}^{Q}}}\,{\displaystyle {\sum_{k=1}^{N}}}\,\left(\,\lambda_{q}+{\pi}\right)\cdot w_{q,k}\cdot p_{q}(k)\,\mathcal{Q}^{\,\prime}(\wh{\gamma}_{q,k})\,\left[\,\wh{\gamma}_{q,k}-\wh{\gamma}_{q,k}^{\,\text{{ref}}}\,\right]\\[0.4in]
 & +{\displaystyle {\sum_{q=1}^{Q}}}\,\lambda_{q}\,{I}_{q}^{\,\max}+{\pi}\cdot{I}^{\,\text{{max}}}\vspace{-1.3cm}.\\[0.3in]
\end{array}\label{eq:KKT_existence_QNE}
\end{equation}
} 

\noindent Using the fact that the third and the fourth term on the
right hand side of (\ref{eq:KKT_existence_QNE}) are nonpositive {[}the
nonpositivity of the fourth term comes from (\ref{eq:zeta-ref}) and
$\mathcal{Q}^{\,\prime}(\wh{\gamma}_{q,k})\leq0${]}, we have

\noindent {\small 
\[
\begin{array}{l}
{\displaystyle {\sum_{q=1}^{Q}}}\,\lambda_{q}\,{I}_{q}^{\,\max}+{\pi}\cdot{I}^{\,\text{\ensuremath{\max}}}\,\leq\,{\displaystyle {\displaystyle {\sum_{q=1}^{Q}}}{\displaystyle {\sum_{k=1}^{N}}}}\left[\left(\dfrac{2\,{\wh{\tau}_{q}\,(\,\wh{\tau}_{q}^{\,{\rm ref}}-\wh{\tau}_{q}\,)}}{f_{q}\, T_{q}}\right)\left(\,1-\mathcal{Q}(\wh{\gamma}_{q,k})\,\right)+\left(\,1-{\displaystyle {\frac{\wh{\tau}_{q}^{\,2}}{f_{q}\, T_{q}}}\,}\right)\left(\,\wh{\gamma}_{q,k}-\wh{\beta}_{q,k}\,\right)\left|\mathcal{Q}^{\,\prime}(\wh{\gamma}_{q,k})\right|\right]r_{q,k}\left(\mathbf{p}\right)\\[0.3in]
\leq{\displaystyle {\sum_{q=1}^{Q}}}\,{\displaystyle {\sum_{k=1}^{N}}}\left[\,{\displaystyle {\frac{2\,\wh{\tau}_{q}^{\,\max}\,(\,\wh{\tau}_{q}^{\,\max}-\wh{\tau}_{q}^{\,\min}\,)}{f_{q}\, T_{q}}}+\left(\,1-\left({\displaystyle {\frac{\wh{\tau}_{q}^{\,\min}}{\sqrt{f_{q}\, T_{q}}}}}\right)^{2}\,\right)\,{\displaystyle {\frac{\wh{\gamma}_{q,k}^{\max}-\wh{\beta}_{q,k}}{\sqrt{2\,\pi}}}\,}}\right]\,\log\left(1+{\displaystyle {\frac{|{H}_{qq}(k)|^{2}p_{q,k}^{\,\max}}{{\sigma}_{q,k}^{2}}}}\right)
\end{array}
\]
}where the last inequality follows from the bounds (\ref{eq:bounds})
and $\left|\mathcal{Q}^{'}(\wh{\gamma}_{q,k})\right|\leq1/\sqrt{{2\pi}}$,
which establishes the boundedness of the elements in the set $\mathcal{L}_{\leq}$.

\noindent \indent The last part of the theorem follows from Proposition
\ref{pr:LNE implies QE} and the easy check that a trivial QNE cannot
satisfy (\ref{eq:VI_ref}).\hfill{}$\Box$

\begin{spacing}{1.1000000000000001}
\noindent \textcolor{black}{\footnotesize \bibliographystyle{IEEEtran}
\bibliography{scutari_refs}
}{\footnotesize \par}
\end{spacing}

\end{document}